\def\sqr#1#2{{\vcenter{\vbox{\hrule height.#2pt
  \hbox {\vrule width.#2pt height#1pt \kern#1pt
    \vrule width.#2pt}
  \hrule height.#2pt}}}}
\def\square{\mathchoice\sqr56\sqr56\sqr45\sqr34}
\def\done{\rightline{$\square$}}
\newtheorem{theorem}{Theorem}[section]
\newtheorem{lemma}{Lemma}[section]
\newtheorem{proposition}[theorem]{Proposition}
\newenvironment{proof}{{\it Proof.}}{\newline \done}
\newcommand{\TT}{\mathbf{T}}
\newcommand{\VV}{\mathbf{V}}
\newcommand{\lo}{\mathrm{low}}
\newcommand{\up}{\mathrm{upp}}
\providecommand{\Pr}{\mathrm{Pr}}
\providecommand{\1}{\mathbf{1}}
\begin{document}
\title{A One-way ANOVA Test for Functional Data with Graphical Interpretation
}
\date{}
\maketitle


\centerline{Tom\'a\v s Mrkvi\v cka}
{Faculty of Economics, University of South Bohemia, Studentsk{\'a} 13, 37005 \v{C}esk\'e Bud\v{e}jovice, Czech Republic,               {email:mrkvicka.toma@gmail.com} }  \\
\centerline{Mari Myllym{\"a}ki}
{Natural Resources Institute Finland (Luke), Latokartanonkaari 9, FI-00790 Helsinki, Finland}\\

\centerline{Milan J\'ilek}
{Faculty of Economics, University of South Bohemia, Studentsk{\'a} 13, 37005 \v{C}esk\'e Bud\v{e}jovice, Czech Republic }\\
\centerline{Ute Hahn}
{ Centre for Stochastic Geometry and Advanced Bioimaging, Department of Mathematics, University of Aarhus, 8000 Aarhus C, Denmark   }


\begin{abstract}
\ A new functional ANOVA test, with a graphical interpretation of the result, is presented. 
The test is an extension of the global envelope test introduced by Myllym{\"a}ki et al.\ (2017, Global envelope tests for spatial processes, J.\ R.\ Statist.\ Soc.\ B 79, 381--404, doi: 10.1111/rssb.12172).
The graphical interpretation is realized by a global envelope which is drawn jointly for all samples of functions. If a mean function computed from the empirical data is out of the given envelope, the null hypothesis is rejected with the predetermined significance level $\alpha$.
The advantages of the proposed one-way functional ANOVA are that it identifies the domains of the functions which are responsible for the potential rejection. 
We introduce two versions of this test: the first gives a graphical interpretation of the test results in the original space of the functions and the second immediately offers a post-hoc test by identifying the significant pair-wise differences between groups. 
The proposed tests rely on discretization of the functions, therefore the tests are also applicable in the multidimensional ANOVA problem.
In the empirical part of the article, we demonstrate the use of the method by analyzing fiscal decentralization in European countries. 

{\bf Keywords:} Global envelope test; Groups comparison; Permutation test; Europe; Fiscal decentralization; Nonparametrical methods

{\bf Mathematical subject classification (2010):} {MSC 62H15; MSC 62G10}
\end{abstract}

\section{Introduction}
Functional data appear in a number of scientific fields, where the process of interest is monitored continuously. Those include e.g.\ monitoring of the share price, the temperature in a given location or monitoring of a body characteristic. A classical statistical problem is to decide, if there exist differences between the groups of functions (e.g.\ the control  and treatment group). This problem is usually solved by determining the differences among the mean group functions and then we deal with a one-way functional ANOVA  problem.

The functional ANOVA problem, both one-way and more complex designs, was previously studied by many authors.
For example, \cite{CuevasEtal2004} introduced an asymptotic version of the ANOVA $F$-test, and \cite{Zhang2014} considered asymptotic or bootstrapped versions of a $L_2$-norm based test, $F$-type statistic based test and globalizing pointwise $F$-test. Further, \cite{GoreckiSmaga2015} introduced a method based on a basis function representation, \cite{RamsaySilverman2006} described a bootstrap procedure based on pointwise $F$-tests, \cite{AbramovichAngelini2006} used wavelet smoothing techniques, and \cite{FerratyEtal2007} used a dimension reduction approach. Furthermore, \cite{CuestaFebrero2010} applied the $F$-test on several random univariate projections and bound the tests together through the false discovery rate (FDR).
There is also the possibility to transform the functions into single numbers and use the classical ANOVA, but such a procedure can be blind against some alternatives.

Furthermore, nonparametric permutation procedures have been used to address this problem. \cite{Hahn2012} used a one-dimensional integral deviation statistic to summarize the deviance between groups. 
Its distribution was obtained by permuting the functions. \cite{NicholsHolmes2001} {based the test} either on certain pointwise statistics, such as the $F$-statistic, and found the distribution of its maxima by permutation, or {alternatively used} the size of the area which is given by exceeding some given threshold. 
Since these statistics need to satisfy the homogeneity across the functional domain, \cite{PantazisEtal2005} recommended to concentrate on the $p$-values which are implicitely homogeneous across the domain and find the distribution of its minima by permutation. This $p$-min and also $F$-max methods are able to identify the regions of rejections by identifying a threshold of the statistics of the interest. 

To identify the regions of rejections also other methods were developed in the literature. \citet{CoxLee2008} developed  a method similar to the $p$-min procedure, but specifically designed for functional data. They applied the Westfall-Young randomization method to correct for multiple testing. A global $p$-value can not be obtained for this method. \citet{VsevolozhskayaEtAl2014} partitioned the domain of interest and applied multiple testing correction on the individual parts of the domain. A disadvantage is that partition has to be prespecified.
\citet{PiniEtAl2018} developed a functional multi-way ANOVA framework for determining the regions of rejections under the interval wise control of the error rate.  This approach controls false rejection on any interval, whereas typically the family wise error rate, i.e.\ false rejection of any pointwise hypothesis, is controlled. Also the regions with differences between groups can not be determined by this method. \citet{ChoiReimherr2018} defined a graphical representation of confidence bands for functional data inference, which can be used for two functional sample testing.

However, none of the available methods is able to give a graphical interpretation of group specific differences with respect to the null hypothesis or pairwise group differences, together with providing the global $p$-value in the family wise error rate sense. This interpretation can help the user to understand what are the reasons of potential rejections, when or where the potential differences appear. Our new proposed method which has such a graphical interpretation is based on the global rank envelope test and the extreme rank length measure introduced in \cite{MyllymakiEtal2017}. Here we extend this procedure for the functional ANOVA setting by using permutations to obtain the simulations from the null model which are required for applying the global rank envelope test. We define the test statistics suitable for testing the functional ANOVA null hypothesis. Further, we introduce here new graphical interpretation of the test based on the extreme rank length directly which allows to merge the graphical interpretation with single extreme rank length $p$-value rather than with $p$-interval as it was merged in the original definition of global rank envelope test. In this work, we concentrate on the one-way functional ANOVA problem only, because in this case and under homoscedasticity the proposed Monte Carlo test is exact, i.e.\ its type I error is exactly the prescribed significance level $\alpha$, {under the assumption of all observed functions being from the same distribution}. We also define an extension of our method for the heteroscedastic case. 

In Section \ref{sec:ANOVA} we introduce two versions of our completely nonparametric method for the one-way functional ANOVA problem. The graphical interpretation of the second version of the test also gives an immediate post-hoc test {for finding which of the groups differ from each other}. Interestingly, this post-hoc comparison is done simultaneously with the ANOVA test, thus at the exact significance level $\alpha$. Therefore, no second comparison is needed to find which groups differ.

In Section \ref{sec:fANOVA} we also describe the global rank envelope test applied to the pointwise $F$-statistics. This test does not have its graphical representation in the space of the functions, but we introduce it as another possibility of applying the global envelope test in the functional ANOVA setting. Further, in Section \ref{sec:HS} we show, how these methods can be used to testing the homoscedasticity.

In Section \ref{sec:SS}, we present results of a simulation study that was performed to compare powers of our graphical procedures with the powers of the procedures which were already available. In order to be able to compare the performances, we chose to the comparison only such procedures which were available through the software R \citep{REnvir2019} and which provided the global $p$-value in the sense of the family wise error rate. The simulation study was performed only in order to show that our method, which has a unique graphical interpretation, has comparable power with respect to other global methods. All our proposed methods can be found in the R package GET {\citep{MyllymakiEtal2017, MyllymakiMrkvicka2019}}.

In Section \ref{sec:DS}, we apply our methods to the fiscal decentralization issue in European countries. The empirical analysis aims to capture differences in developments 
of government expenditure decentralization among different groups of European countries. The assumption is, based on the existing literature, that countries with a longer European integration history and therefore presumably with deeper economic and political integration are supposed to decentralize their government expenditure more extensively. We use the government expenditure centralization ratios of 29 European Union and EFTA countries in period from 1995 to 2016 sorted into three groups according to the presumed level of European economic and political integration.

Section \ref{sec:discussion} is for further discussion.

\section{Graphical functional ANOVA} \label{sec:ANOVA}

Let us assume that we have $J$ groups which contain $n_1, \ldots , n_J$ functions {observed on the finite interval $R=[a, b]$} and denote the functions by $T_{ij}, j=1, \ldots, J, i=1, \ldots , n_j$.
Assume that $\{T_{ij}\in L^\infty, i=1, \ldots , n_i\}$ is an i.i.d. sample from a stochastic process $SP(\mu_j, \gamma_j)$ with a mean function $\mu_j$ and a covariance function $\gamma_j(s, t), s, t \in R$ for $j=1, \ldots , J$.
We want to test the hypothesis
$$H_0 : \mu_1(r) = \ldots = \mu_J(r), r \in R.$$
We do not need to specify the stochastic process  $SP$ in order to define our method and thus our method can be taken as completely nonparametric comparison of groups of functions. 

The hypothesis $H_0$ is equivalent to the hypothesis
$$
H'_0 : \mu_{j'}(r) - \mu_j(r) = 0, r \in R,  j' =1, \ldots , J-1, j =j', \ldots , J.
$$
This hypothesis corresponds to the post-hoc test done usually after the ANOVA test is significant.

In the following we introduce the test statistics both for the hypothesis $H_0$ and $H'_0$, first for the case of equal covariance functions (i.e.\ for the case of $\gamma_1(s,t) = \ldots = \gamma_J(s,t), s, t \in R$) and then for the case of unequal variance functions (i.e.\ for the case of $\gamma_1(s,t) / \gamma_1(s,s)  = \ldots = \gamma_J(s,t) / \gamma_J(s,s), s, t \in R$) (Sections \ref{sec:one-way-group-comparison} and \ref{sec:fANOVA}). Then we describe how the permutations are performed under the null hypothesis (Section \ref{sec:permutations}) and show how the rank envelope test can be used for these test statistics (Section \ref{sec:GET}). 

The implementation of our method relies on the discretization of functions. We assume that all functions are discretized in the same way obtaining values at points $(r_1, \ldots , r_K)$. If this is not the case, we have to apply smoothing techniques \citep[e.g.\ those described in][]{Zhang2014} and then make such a necessary discretization. In the simulation study (see Section \ref{sec:SS}) we study our method with respect to increasing denseness of the discretization. Remark here, that the discretization can be arbitrary and equidistances are not required.

\subsection{Test vectors}\label{sec:one-way-group-comparison}

The hypothesis $H_0$ can be tested by the test vector consisting of the mean of functions in the first group followed by the mean of test functions in the second group, etc.
{Shortly, the test vector is}
\begin{equation}\label{TT}
\TT = (\overline{T}_1({\bf r}), \overline{T}_2({\bf r}), \ldots , \overline{T}_J({\bf r})),
\end{equation}
where $\overline{T}_j({\bf r}) = (\overline{T}_j(r_1), \ldots , \overline{T}_j(r_K))$.
Thus, the length of the test vector becomes $J \times K$, where $K$ stands for the number of $r$ values to which the functions are discretized.

The hypothesis $H'_0$ can be tested by the test vector consisting of the differences of the group means of functions{, i.e. the test vector is}
\begin{equation}\label{TTprime}
\TT' = (\overline{T}_1({\bf r}) - \overline{T}_2({\bf r}), \overline{T}_1({\bf r}) - \overline{T}_3({\bf r}), \ldots , \overline{T}_{J-1}({\bf r}) - \overline{T}_J({\bf r})).
\end{equation}
Here the length of the test vector becomes $J (J-1) / 2 \times K$.

\subsubsection{Correction for an unequal variances} \label{corrH0}

To deal with different variances of functions in different groups, 
we consider the rescaled functions $S_{ij}$ instead of the original functions $T_{ij}$,

\begin{equation}\label{Sij1}
S_{ij}(r) = \frac{T_{ij}(r) - \overline{T_j}(r)}{\sqrt{\text{Var}(T_j(r))}}\cdot \sqrt{\text{Var}(T(r))} + \overline{T_j}(r),
\end{equation}
where the group sample mean $\overline{T_j}(r)$ and overall sample variance $\text{Var}(T(r))$ are involved
to keep the mean and variability of the functions at the original scale. The group sample variance $\text{Var}(T_j(r))$ corrects the unequal variances. 

For small samples, the sample variance estimators 
can have big variance, which may influence the test procedure undesirably.
In order to deal with this problem the variances can be smoothed by applying moving average (MA) to the estimated variance with a chosen window size $b$.
Thus, the rescaled functions take the form
\begin{equation}\label{Sij2}
S_{ij}(r) = \frac{T_{ij}(r) - \overline{T_j}(r)}{\sqrt{\text{MA}_b(\text{Var}(T_j(r)))}}\cdot \sqrt{\text{MA}_b(\text{Var}(T(r)))} + \overline{T_j}(r).
\end{equation}

After transformation, the test vectors are composed in the same way as in the case of the equal covariance functions but with rescaled functions:
\begin{equation}\label{TT_1}
\TT_s = (\overline{S}_1({\bf r}), \overline{S}_2({\bf r}), \ldots , \overline{S}_J({\bf r})).
\end{equation}
where $\overline{S}_j({\bf r}) = (\overline{S}_j(r_1), \ldots , \overline{S}_j(r_K)$, and
\begin{equation}\label{TT_1prime}
\TT'_s = (\overline{S}_1({\bf r}) - \overline{S}_2({\bf r}), \overline{S}_1({\bf r}) - \overline{S}_3({\bf r}), \ldots , \overline{S}_{J-1}({\bf r}) - \overline{S}_J({\bf r})).
\end{equation}

\subsection{Rank envelope $F$-type test}\label{sec:fANOVA}
When a graphical interpretation for group specific differences is not of interest but the area of rejection is, one can utilize the $F$-type test for each $r\in R$ separately and form the test vector from the $r$-wise $F$-statistics,
$$\TT_F = (F(r_1), F(r_2), \ldots , F(r_K)),$$ 
where $F(r_k)$ stands for the $F$-statistic computed from the univariate ANOVA model at point $r_k$.
In this case the correction for unequal variances can be done by choosing the variance corrected $F$-statistic.

\subsection{Homoscedasticity tests} \label{sec:HS}
Following the Levene's test of homoscedasticity, it is possible to test the equality of variances in functional ANOVA design by setting the test vector 
$\TT_V = \TT$, $\TT_V = \TT'$ or $\TT_V=\TT_F$, where instead of the original functions $T_{ij}$, the functions
\begin{equation}\label{eq:variance}
Z_{ij}(r)=|T_{ij}(r)-\overline{T}_j(r)|
\end{equation}
are used.
Similarly, in order to test the equality of lag $s$ covariance, it is possible to set the test vector $\TT_C = \TT$, $\TT_C = \TT'$ or $\TT_C=\TT_F$, where instead of the original functions $T_{ij}$, the functions
\begin{equation}\label{eq:covariance}
W_{ij}(r)=\sqrt{|(T_{ij}(r)-\overline{T}_j(r))(T_{ij}(r+s)-\overline{T}_j(r+s))|}\cdot\text{sign}[(T_{ij}(r)-\overline{T}_j(r))(T_{ij}(r+s)-\overline{T}_j(r+s))],    
\end{equation}
for $r\in [a, b-s]$, are used.

\subsection{Permutations and exchangebility of the test vectors}\label{sec:permutations}
The most important aspect of the permutation tests is the manner in which data are shuffled under the null hypothesis. In all our one-way ANOVA tests, we perform the simple permutation of raw functions among the groups. That is, if $G$ is a vector of group indices of length $N=\sum_{j=1}^J n_j$, then the permutation $\mathbf P$ is $N \times N$ matrix that has all elements being either 0 or 1, each row and column having exactly one 1.  Pre-multiplicating the group indices $G$ by the matrix $\mathbf P$ permutes the group indices. 
Note that the possible correction for unequal variances is performed prior to the permutation and the permutations are consequently done for the rescaled functions \eqref{Sij1} or \eqref{Sij2}. 

We say that a test vector $\TT$ is exchangeable if the observed and simulated (permuted) test vectors $\TT^1, \dots, \TT^s$ are exchangeable, i.e.\ the joint distribution of $\TT^1, \dots, \TT^s$ is not affected by permutation.

\begin{proposition}\label{proposition1}
Under the assumption that all the functions $T_{ij}, j=1, \ldots , J, i=1, \ldots , n_j$ follow the same stochastic process, the test vectors $\TT$, $\TT'$, $\TT_F$ and also $\TT_V$, $\TT_C$ are exchangeable for permutations $\mathbf P$. Under the assumption of normality of stochastic processes $SP(\mu_j,\gamma_j)$ the test vectors $\TT_s$, $\TT'_s$ and $\TT_F$ with variance corrected $F$-statistics are asymptotically exchangeable for permutations  $\mathbf P$ for the case of unequal variances and the null hypothesis of equal means. The asymptotics is taken over $\min_{j=1}^J n_j$.
\end{proposition}
\begin{proof}
Since the permutations are performed on the whole functions (i.e.\ the block permutation scheme is used) and we assume that the functions form an i.i.d.\ sample from a stochastic process, the joint distribution of $\TT$ ($\TT'$, $\TT_F$, $\TT_V$, $\TT_C$) is equal to the joint distribution of $\TT$ ($\TT'$, $\TT_F$, $\TT_V$, $\TT_C$) for permuted  groups $\mathbf P G$.

In the case of unequal variances the functions are first scaled by the sample group variance for computation of $\TT_s$ and $\TT'_s$. The sample group variance  $\text{Var}(\overline{T}_j(r))$ converges a.s.\ to the true group variance. {This holds similarly} for the group sample mean and overall sample variance. Thus the stochastic process $S_{ij}$ converges in distribution to $SP(\mu,\gamma_j(s,t)\gamma(s,s)/\gamma_j(s,s))$, where $\gamma(s,s)$ is the overall variance. Under the null hypotheses of equal means and unequal variances and assumption of normality these stochastic processes are equal and thus the test vectors are asymptotically exchangeable. A similar proof can be made for $\TT_F$ in the case of unequal variances. 
\end{proof}

\subsection{ Global rank envelope test}\label{sec:GET}
The idea of the global rank envelope was introduced in \cite{MyllymakiEtal2017} for {testing in spatial statistics}. Further \citet{MrkvickaEtal2017} extended the notion of this global envelope for general multivariate test vectors. This extension applies, e.g., to the case where the multivariate vector consists of values of two or more functions at once. We first recall the measures and associated $p$-values introduced in \cite{MyllymakiEtal2017}. Second, we define the global extreme rank length  envelope as a refinement of the global rank envelope.

Assume the general multivariate vector of the form
\[
\VV = \big(V_1,\dots,V_d).
\]
Let $\VV_1, \dots, \VV_{s}$ be the multivariate vectors generated by permutations under the null hypothesis. Let $\VV_1$ denote the vector obtained by identical permutation.

First we define the extreme rank $R_i$ of the vector $\VV_i=(V_{i1},\dots,V_{id})$ as the minimum of its pointwise ranks, namely
  \begin{equation}\label{eq:minglobalrank}
  R_i=\min_{k=1,\ldots,d} R_{ik},
  \end{equation}
where the pointwise rank $R_{ik}$ is the rank of the element $V_{ik}$ among the corresponding elements $V_{1k}, V_{2k}, \dots, V_{s k}$ of the $s$ vectors such that the lowest ranks correspond to the most extreme values of the statistics.
How the pointwise ranks are determined, depends on whether a one-sided or a two-sided envelope test is to be performed: Let $r_{1k}, r_{2k}, \dots, r_{s k}$ be the raw ranks of $V_{1k}, V_{2k}, \dots, V_{s k}$, such that the smallest $V_{ik}$ has rank 1. In the case of ties, the raw ranks are averaged. The pointwise ranks are then calculated as
\begin{equation}
  R_{ik}=\begin{cases}
    r_{ik}, &\text{for one-sided test, small $V$ is considered extreme}\\
    s+1-r_{ik}, &\text{for one-sided test, large $V$ is considered extreme}\\
   \min(r_{ik}, s+1-r_{ik}), &\text{for two-sided test}.
  \end{cases}
\end{equation}

The extreme ranks can contain many ties, e.g. in a one-sided test with $d$-variate vectors, up to $d$ out of the $s$ vectors can take the rank 1. Therefore we need to break these ties in an efficient way. Ordering of the vectors by the extreme rank length \citep{MyllymakiEtal2017} refines the extreme rank ordering in order to minimize the possibility of ties. 

Consider the vectors of pointwise ordered ranks $\mathbf{R}_i=(R_{i[1]}, R_{i[2]}, \dots , R_{i[d]})$, where $\{R_{i[1]},  \dots , R_{i[d]}\}=\{R_{i1}, \dots, R_{id}\}$ and $R_{i[k]} \leq R_{i[k^\prime]}$ whenever $k \leq k^\prime$.
The extreme rank given in \eqref{eq:minglobalrank} corresponds to $R_i=R_{i[1]}$. The extreme rank length measure of the vectors $\mathbf{R}_i$ is equal to
\begin{equation}
\label{eq:lexicalrank}
   R_i^{\text{erl}} = \frac{1}{s}\sum_{i^\prime=1}^{s} \1(\mathbf{R}_{i^\prime} \prec \mathbf{R}_i),
\end{equation}
where
\[
\mathbf{R}_{i^\prime} \prec \mathbf{R}_{i} \quad \Longleftrightarrow\quad
  \exists\, n\leq d: R_{i^\prime[k]} = R_{i[k]} \forall k < n,\  R_{i^\prime[n]} < R_{i[n]}.
\]

We remark here that \citet{NarisettyNair2016} independently defined the two-sided extreme rank length measure as a functional depth and called it extremal depth. 

\subsubsection{$p$-values}
We distinguish three different $p$-values attached to the rank envelope test. All the $p$-values are based on Monte Carlo testing principles.
The conservative and liberal $p$-values are given as
  \begin{equation}\label{eq:p-value_globalranktest}
   p_+ =  \sum_{i=1}^{s} \1 (R_i \leq R_1)  \big/ s, \quad  p_- =  \sum_{i=1}^{s} \1 (R_i < R_1)  \big/ s.
  \end{equation}
The $p$-value based on the extreme rank length ordering is given as
 \begin{equation}\label{eq:p-value_globalranktest_erl}
   p^{\text{erl}} =  1 - \sum_{i=1}^{s} \1 (\mathbf{R}_1 \prec \mathbf{R}_i)  \big/ s.
  \end{equation}
According to \citet[Proposition 6.1]{MyllymakiEtal2017}, it holds that $p_- < p^{\text{erl}} \leq p_+$.

Note here, that there still can appear some ties between $\mathbf{R}_1$ and $\mathbf{R}_i, i=2, \ldots , s$. However, since these ties are unlikely to happen, we define $p^{\text{erl}}$ as the conservative $p$-value. 
Alternatively these ties could be broken by randomization.

\subsubsection{The new graphical envelope}
\citet{MyllymakiEtal2017} defined the graphical global envelope with respect to the ordering of the extreme ranks $R_i, i=1, \ldots , s$. This ordering can have a lot of ties and consequently the graphical envelope based on this ordering requires a lot of permutations in order to be precise. We eliminate this problem in this paper by defining the graphical envelope with respect to the extreme rank length ordering \eqref{eq:lexicalrank}. 

Assuming that all the $\VV_i$ follow the same joint distribution, we construct rank envelopes with level $(1-\alpha)$ as sets $\{\VV_{\lo}^{(\alpha)}, \VV_{\up}^{(\alpha)}\}$ such that, the probability that $\VV=(V_{1}, \dots, V_{d})$ falls outside this envelope in any of the $d$ points is less or equal to $\alpha$,
\begin{equation*}
  \Pr\big(\exists\, k\in \{1,\dots,d\}: V_{k}\notin[V_{\lo\,k}^{(\alpha)},V_{\up\,k}^{(\alpha)}])\leq \alpha.
\end{equation*}
{
Let $R_{(\alpha)}^{\text{erl}}$ be the largest value in $\{R_1^{\text{erl}},\dots,R_{s}^{\text{erl}}\}$ for which
\begin{equation}\label{eq:Ralpha}
  \sum_{i=1}^{s} \1 \left(R_i^{\text{erl}} < R_{(\alpha)}^{\text{erl}}\right) \leq \alpha s,
\end{equation}
and $I_\alpha = \{i\in 1,\dots, s: R_i^{\text{erl}} \geq R_{(\alpha)}^{\text{erl}}  \}$ be the index set of vectors whose extreme rank length measure is larger than or equal to the critical value $R_{(\alpha)}^{\text{erl}}$.
Then}
define
\begin{equation*}
  V_{\lo\,k}^{(\alpha)}= \min_{i\in I_\alpha}V_{ik}, \quad   V_{\up\,k}^{(\alpha)}= \max_{i\in I_\alpha}V_{ik}
\end{equation*}
for the two-sided test, following the idea of \cite{NarisettyNair2016}. For one-sided tests, let $V_{\lo\,k}^{(\alpha)}=-\infty$ or $V_{\up\, k}^{(\alpha)}=\infty$, respectively, for all $k=1,\ldots,d$.
{This envelope has} the graphical interpretation described in the next subsection.

The following theorem states that inference based on the $p^{\text{erl}}$ and the global envelope specified by $ V_{\lo\,k}^{(\alpha)}$ and $V_{\up\,k}^{(\alpha)}$ are equivalent. Therefore, we can refer to this envelope as the $100\cdot(1-\alpha)$\% global extreme rank length envelope.
\begin{theorem}\label{thm:envelope-vs-pinterval}
Let $p^{\text{erl}}$ be as given in \eqref{eq:p-value_globalranktest_erl}, and $ V_{\lo\,k}^{(\alpha)}, V_{\up\,k}^{(\alpha)}$ define the $100\cdot(1-\alpha)$\% global extreme rank length envelope. Then, assuming that there are no pointwise ties with probability $1$, it holds that:
\begin{enumerate}
 \item $V_{1k} < V_{\lo\,k}^{(\alpha)}$ or $V_{1k} > V_{\up\,k}^{(\alpha)}$ for some $k = 1, \ldots , d$
iff $p^{\text{erl}} \leq \alpha$, in which case the null hypothesis is rejected;
 \item $V_{\lo\,k}^{(\alpha)} \leq V_{1k} \leq V_{\up\,k}^{(\alpha)}$ for all $k = 1, \ldots , d$ iff $p^{\text{erl}} > \alpha$, and thus the null hypothesis is not rejected;
\end{enumerate}
\end{theorem}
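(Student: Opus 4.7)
The plan is to prove the equivalence in two steps, using membership $1 \in I_\alpha$ as the intermediate anchor. First I would show $1 \in I_\alpha \Leftrightarrow p^{\text{erl}} > \alpha$ from the numerical definitions of $R_i^{\text{erl}}$, $R_{(\alpha)}^{\text{erl}}$ and $p^{\text{erl}}$, and then $1 \in I_\alpha \Leftrightarrow V_{1k} \in [V_{\lo\,k}^{(\alpha)}, V_{\up\,k}^{(\alpha)}]$ for all $k$, by analysing the pointwise ranks at a coordinate where $\mathbf{R}_1$ attains its overall minimum. Statement~(2) of the theorem will then be the conjunction of these two equivalences, and statement~(1) its contrapositive.

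For the first step I would establish the monotonicity $\mathbf{R}_i \prec \mathbf{R}_j \Rightarrow R_i^{\text{erl}} < R_j^{\text{erl}}$: every $i'$ counted in $s R_i^{\text{erl}}$ is also counted in $s R_j^{\text{erl}}$ by transitivity of $\prec$, and additionally $\mathbf{R}_i$ itself is counted for $j$ but not for $i$. Combined with the trichotomy of the total lex order, this yields $R_i^{\text{erl}} \leq R_j^{\text{erl}} \Leftrightarrow \mathbf{R}_i \preceq \mathbf{R}_j$. Partitioning the $s$ vectors by their lex relation to $\mathbf{R}_1$ then rewrites the $p$-value as $p^{\text{erl}} = \#\{i: R_i^{\text{erl}} \leq R_1^{\text{erl}}\}/s$. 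Enumerating the distinct values $v_1 < \dots < v_L$ of $\{R_i^{\text{erl}}\}$ with cumulative counts $M_\ell$, the defining property of $R_{(\alpha)}^{\text{erl}} = v_{\ell^*}$ as the largest $v_\ell$ with $M_{\ell-1} \leq \alpha s$ forces $M_{\ell^*} > \alpha s$, since otherwise $\ell^*$ could be incremented. Writing $R_1^{\text{erl}} = v_\ell$, the case $\ell \geq \ell^*$ gives $M_\ell \geq M_{\ell^*} > \alpha s$ and the case $\ell < \ell^*$ gives $M_\ell \leq M_{\ell^*-1} \leq \alpha s$, whence $R_1^{\text{erl}} \geq R_{(\alpha)}^{\text{erl}} \Leftrightarrow p^{\text{erl}} > \alpha$.

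For the second step the direction $1 \in I_\alpha \Rightarrow V_{1k} \in [V_{\lo\,k}^{(\alpha)}, V_{\up\,k}^{(\alpha)}]$ is immediate since $V_{1k} \in \{V_{ik}: i \in I_\alpha\}$ lies between its own min and max. For the converse I would argue the contrapositive: assuming $1 \notin I_\alpha$, the monotonicity above gives $\mathbf{R}_1 \prec \mathbf{R}_i$ for every $i \in I_\alpha$, and in particular $R_{ik^*} \geq R_{i[1]} \geq R_{1[1]} = R_{1k^*}$ whenever $k^*$ is a coordinate at which $\mathbf{R}_1$ attains its minimum. Splitting on whether $V_{1k^*}$ is on the low side ($R_{1k^*} = r_{1k^*}$) or the high side ($R_{1k^*} = s+1-r_{1k^*}$) of its two-sided rank, the pointwise inequality $\min(r_{ik^*}, s+1-r_{ik^*}) \geq R_{1k^*}$ forces $r_{ik^*} \in [r_{1k^*}, s+1-r_{1k^*}]$ in the low case and $r_{ik^*} \in [s+1-r_{1k^*}, r_{1k^*}]$ in the high case for every $i \in I_\alpha$. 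The no-pointwise-ties hypothesis then excludes $r_{ik^*} = r_{1k^*}$ for $i \neq 1$, so the relevant inequality becomes strict, yielding either $V_{1k^*} < V_{\lo\,k^*}^{(\alpha)}$ or $V_{1k^*} > V_{\up\,k^*}^{(\alpha)}$.

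The main obstacle will be the hard direction of the second step, where several $i \in I_\alpha$ may tie $\mathbf{R}_1$ at various prefix positions of the lex comparison and differ only at later positions, so that no single coordinate is pinned down by the lex relation alone. The trick that avoids tracking individual tie-breaking indices is to fix a single coordinate $k^*$ realising the global minimum $R_{1[1]}$ of $\mathbf{R}_1$ and use $R_{ik^*} \geq R_{i[1]} \geq R_{1[1]}$ uniformly over $i \in I_\alpha$; the no-pointwise-ties hypothesis then upgrades this pointwise rank inequality to strict separation of $V_{1k^*}$ from the envelope bound at $k^*$.
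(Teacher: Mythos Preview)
Your proposal is correct and follows the same skeleton as the paper's proof: both pivot on the equivalence $p^{\text{erl}} \leq \alpha \Leftrightarrow R_1^{\text{erl}} < R_{(\alpha)}^{\text{erl}} \Leftrightarrow 1 \notin I_\alpha \Leftrightarrow \VV_1$ exits the envelope. The paper's argument is extremely terse (the last implication is dispatched with ``according to the definition of the extreme rank length envelope''), whereas you actually supply the missing work for the hard direction $1 \notin I_\alpha \Rightarrow \exists k: V_{1k} \notin [V_{\lo\,k}^{(\alpha)}, V_{\up\,k}^{(\alpha)}]$ by fixing a coordinate $k^*$ realising $R_{1[1]}$, using $R_{i[1]} \geq R_{1[1]}$ from $\mathbf{R}_1 \prec \mathbf{R}_i$, and invoking the no-pointwise-ties hypothesis to turn the rank inequality into strict separation of $V_{1k^*}$ from the envelope bound; this is exactly what the paper leaves implicit, so your write-up is a faithful elaboration rather than a different route.
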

\begin{proof}
According to the definition of $p^{\text{erl}}$ is $p^{\text{erl}} \leq \alpha$ iff number of  $\mathbf{R}_i$ smaller or equal to $\mathbf{R}_1$ is smaller or equal to $\alpha s$. That is equivalent, according to the definition of $R_{(\alpha)}^{\text{erl}}$, to the $R_1^{\text{erl}} < R_{(\alpha)}^{\text{erl}}$. This holds iff $1 \notin I_\alpha$, which is equivalent to $V_{1k} < V_{\lo\,k}^{(\alpha)}$ or $V_{1k} > V_{\up\,k}^{(\alpha)}$ for some $k = 1, \ldots , d$ according to the definition of the extreme rank length envelope.
The second part of the proof can be proven equivalently.
\end{proof}

In the following theorem, we will prove that the global extreme rank length envelope is contained in the global rank envelope.
The $l$-th rank envelope was defined in \cite{MyllymakiEtal2017} by
\begin{equation}\label{kth_envelopes}
V^{(l)}_{\lo \, k}= \underset{i=1,\dots,s}{{\min}^{l}} V_{i k}
\quad\text{and}\quad
V^{(l)}_{\up \, k}= \underset{i=1,\dots,s}{{\max}^{l}} V_{i k} \quad \text{for } k = 1, \ldots , d,
\end{equation}
where $\min^l$ and $\max^l$ denote the $l$-th smallest and
largest values, respectively, and $l=1,2,\dots,\lfloor (s+1)/2\rfloor$.

\begin{lemma}\label{lemmaWl} Let
\begin{equation*}
  W_{\lo\,k}^{(l)}= \min_{i\in I_l}V_{ik}, \quad   W_{\up\,k}^{(l)}= \max_{i\in I_l}V_{ik}
\end{equation*}
for $I_l = \{i\in 1,\dots, s: R_i \geq l  \}$.
Then $W_{\lo\,k}^{(l)}\geq V_{\lo\,k}^{(l)}$ and $W_{\up\,k}^{(l)}\leq V_{\up\,k}^{(l)}$.
\end{lemma}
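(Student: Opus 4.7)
The plan is to show that every vector $\VV_i$ indexed by $i \in I_l$ is coordinate-wise contained in the box $\prod_{k=1}^d [V^{(l)}_{\lo\,k},V^{(l)}_{\up\,k}]$ cut out by the classical $l$-th rank envelope of \eqref{kth_envelopes}. Once that inclusion is proved, taking coordinate-wise $\min$ and $\max$ over $I_l$ immediately yields $W^{(l)}_{\lo\,k} \geq V^{(l)}_{\lo\,k}$ and $W^{(l)}_{\up\,k} \leq V^{(l)}_{\up\,k}$, which is exactly the statement of the lemma.

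First I would unpack the definition of $R_i$: membership $i \in I_l$ means $\min_k R_{ik} \geq l$, so for \emph{every} coordinate $k$ we have $R_{ik} \geq l$. In the two-sided envelope setting relevant here, $R_{ik} = \min(r_{ik}, s+1-r_{ik})$, which converts $R_{ik}\geq l$ into the symmetric bound $l \leq r_{ik} \leq s+1-l$ on the raw rank.

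The key step is then translating these rank bounds back into bounds on the values $V_{ik}$. Because $r_{ik} \geq l$, the value $V_{ik}$ is not among the $l-1$ strictly smallest entries of $V_{1k},\dots,V_{sk}$, hence $V_{ik} \geq V^{(l)}_{\lo\,k}$; by the symmetric argument, $r_{ik} \leq s+1-l$ forces $V_{ik} \leq V^{(l)}_{\up\,k}$. Taking $\min_{i\in I_l}$ on the first inequality and $\max_{i\in I_l}$ on the second finishes the proof.

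The one place that needs a sanity check, and the only real obstacle I anticipate, is what happens when the raw rank is averaged because of pointwise ties. This resolves easily: if $V_{ik} < V^{(l)}_{\lo\,k}$, then $V_{ik}$ is strictly smaller than the $l$-th smallest entry and therefore not in the tie-group of $V^{(l)}_{\lo\,k}$, so its averaged rank is at most $l-1$. The contrapositive gives exactly the implication $r_{ik}\geq l \Rightarrow V_{ik}\geq V^{(l)}_{\lo\,k}$ that the argument requires; the upper side is analogous. Beyond this bookkeeping with ties, the proof is a direct unpacking of the definitions of $R_i$, $I_l$, and the $l$-th rank envelope.
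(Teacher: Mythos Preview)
Your argument is correct and follows essentially the same route as the paper's proof: both deduce from $i\in I_l$ that every pointwise rank satisfies $R_{ik}\ge l$, translate this (in the two-sided case) into the raw-rank bounds $l\le r_{ik}\le s+1-l$, and conclude that each $V_{ik}$ lies between the $l$-th smallest and $l$-th largest values, i.e.\ inside the $l$-th rank envelope. Your write-up is simply more explicit---in particular, your handling of the tie-averaging case is a welcome clarification that the paper's one-line proof leaves implicit.
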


\begin{proof}
Since all vectors in $I_l$ have the extreme rank greater or equal to $l$ and thus $R_{ik} \geq l$ or $R_{ik} \leq s-l+1$ (for two-sided test), the envelope defined by $W$ is contained in the envelope defined by $V$.
\end{proof}

\begin{theorem}
The $100\cdot(1-\alpha)\%$ global extreme rank length envelope is contained in the $100\cdot(1-\alpha)\%$ global rank envelope.
\end{theorem}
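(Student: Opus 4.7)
The plan is to reduce the claim to an inclusion between the index sets underlying the two envelopes. Both envelopes are formed by taking pointwise $\min$ and $\max$ over a collection of observed vectors, so whenever the collection used for the extreme rank length envelope is a subset of the collection used for the global rank envelope, monotonicity of $\min$ and $\max$ under set inclusion immediately gives the desired containment. Concretely, the $100\cdot(1-\alpha)\%$ global rank envelope is the $l_\alpha$-th rank envelope $V^{(l_\alpha)}$ from \eqref{kth_envelopes} for the largest integer $l_\alpha$ with $\sum_{i=1}^{s}\1(R_i < l_\alpha)\leq \alpha s$, i.e.\ the exact analogue of the definition of $R_{(\alpha)}^{\text{erl}}$ in \eqref{eq:Ralpha}. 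Lemma \ref{lemmaWl} already shows that the envelope built from $I_{l_\alpha}=\{i:R_i\geq l_\alpha\}$ is contained in $V^{(l_\alpha)}$, so it suffices to prove $I_\alpha \subseteq I_{l_\alpha}$.

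For this inclusion, the key observation is that the extreme rank length ordering refines the extreme rank ordering: if $R_i < R_{i'}$, then taking $n=1$ in the definition of $\prec$ gives $\mathbf{R}_i \prec \mathbf{R}_{i'}$, hence $R_i^{\text{erl}} < R_{i'}^{\text{erl}}$. Consequently every vector with $R_i < l_\alpha$ has strictly smaller ERL than every vector with $R_{i'} \geq l_\alpha$. Setting $\rho^\ast = \min\{R_{i'}^{\text{erl}} : R_{i'}\geq l_\alpha\}$, I therefore obtain the identity $\{i : R_i^{\text{erl}} < \rho^\ast\} = \{i : R_i < l_\alpha\}$, whose cardinality is at most $\alpha s$ by the choice of $l_\alpha$. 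Since $\rho^\ast \in \{R_1^{\text{erl}},\dots,R_s^{\text{erl}}\}$, the maximality in the definition of $R_{(\alpha)}^{\text{erl}}$ yields $\rho^\ast \leq R_{(\alpha)}^{\text{erl}}$, so any $i$ with $R_i < l_\alpha$ satisfies $R_i^{\text{erl}} < \rho^\ast \leq R_{(\alpha)}^{\text{erl}}$ and hence is not in $I_\alpha$. The contrapositive is $I_\alpha \subseteq I_{l_\alpha}$, from which $V_{\lo\, k}^{(\alpha)}\geq V_{\lo\, k}^{(l_\alpha)}$ and $V_{\up\, k}^{(\alpha)}\leq V_{\up\, k}^{(l_\alpha)}$ for every $k$, as required.

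I expect the main obstacle to be this set-inclusion step, specifically the worry that some ERL value might straddle the boundary between extreme ranks $l_\alpha-1$ and $l_\alpha$ and thereby decouple the two thresholds. The refinement property of $R^{\text{erl}}$ under $R$ removes this concern in a single line and produces the clean comparison $\rho^\ast \leq R_{(\alpha)}^{\text{erl}}$ used above; a minor housekeeping point is simply aligning the definition of $l_\alpha$ with that of $R_{(\alpha)}^{\text{erl}}$ so that the two envelopes are genuinely being compared at the same nominal level $\alpha$.
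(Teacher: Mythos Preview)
Your proof is correct and follows the same strategy as the paper's: reduce to Lemma~\ref{lemmaWl} together with a comparison of $I_\alpha$ and $I_{l_\alpha}$. In fact your argument is more careful than the paper's one-line proof, which only asserts that $I_\alpha$ contains fewer vectors than $I_{l_\alpha}$; you establish the set inclusion $I_\alpha \subseteq I_{l_\alpha}$ via the refinement property of the extreme rank length ordering, and this inclusion (not merely the cardinality inequality) is what is actually needed for the pointwise min/max comparison of the two envelopes.
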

\begin{proof}
This is a direct consequence of Lemma \ref{lemmaWl} and the fact that
$I_\alpha$ contains a smaller number of functions than $I_l$ for $l=l_\alpha = \max\left\{l:\  \sum_{i=1}^{s}\1(R_i<l) \leq \alpha s\right\}$, which is the critical rank for the  $100\cdot(1-\alpha)\%$ global rank envelope.
\end{proof}

Remark here that the $(l+1)$-th rank envelope given by $V^{(l+1)}_{\lo \, k}$ and $V^{(l+1)}_{\up \, k}$ is not necessarily contained in the $l$-th extreme rank length envelope given by $W_{\lo\,k}^{(l)}$ and $W_{\up\,k}^{(l)}$.

\subsection{One-way graphical functional ANOVA test}
The proposed tests are performed in three steps. First the test vector is chosen. Second $s$ permutations are applied to the raw functions (or on the rescaled functions in the case of unequal variance) and the chosen test vector is computed for each permutation. Third the {global} rank envelope test is applied to the set of $s$ test vectors. The following theorem specifies the graphical interpretation of our proposed tests and claims the exactness of the graphical method. 

\begin{theorem}\label{thm:envelope-vs-pinterval-T}
Consider one-way graphical functional ANOVA test with $\TT$, $\TT'$ or $\TT_F$ chosen as the test vector. Assume that all the functions $T_{ij}, j=1, \ldots , J, i=1, \ldots , n_j$ follow the stochastic process $SP(\mu_j,\gamma)$.
Let $p^{\text{erl}}$ be as given in \eqref{eq:p-value_globalranktest_erl}, and $T^{{\alpha}}_{\lo \, k}$ and $T^{{\alpha}}_{\up \, k}$ define the $100\cdot(1-\alpha)$\% global extreme rank length envelope. Then, assuming that there are no pointwise ties with probability $1$ in the stochastic process $SP(\mu,\gamma)$, it holds that:
\begin{enumerate}
 \item $T_{1 k} < T^{{\alpha}}_{\lo \, k}$ or $T_{1 k} > T^{{\alpha}}_{\up\, k}$ for some $k$
iff $p^{\text{erl}} \leq \alpha$, in which case the null hypothesis is rejected;
 \item $T^{{\alpha}}_{\lo \, k} \leq T_{1 k} \leq T^{{\alpha}}_{\up\, k}$ for all $k$ iff $p^{\text{erl}} > \alpha$, and thus the null hypothesis is not rejected.
\end{enumerate}
\end{theorem}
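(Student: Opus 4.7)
The plan is to recognize that Theorem 3 is essentially the specialization of Theorem 1 (\texttt{thm:envelope-vs-pinterval}) to the particular test vectors $\TT$, $\TT'$, $\TT_2$ constructed for the one-way ANOVA problem. So the bulk of the work is to check that the general hypotheses of Theorem 1 apply in this setting; once that is verified, the two equivalences follow immediately from Theorem 1 with $d = J\cdot K$, $d = J(J-1)K/2$, or $d=K$ depending on which test vector is chosen.

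First, I would set up the notation: denote by $\TT^1$ the test vector computed on the observed sample and by $\TT^2, \ldots, \TT^s$ the vectors obtained from $s-1$ random permutations $\mathbf P_2, \ldots, \mathbf P_s$ of the group labels. Theorem 1 requires two things, which I would verify in turn. (i) Exchangeability of $\TT^1, \ldots, \TT^s$: this is exactly the content of Proposition 1 under equal variances, since $H_0$, $H_0'$ (and the analogous hypothesis for $\TT_2$) combined with i.i.d. sampling from a single $SP(\mu,\gamma)$ makes the pooled sample exchangeable, and the test statistics are computed via a block permutation scheme on the whole functions, so the joint distribution of $(\TT^1,\ldots,\TT^s)$ is invariant under any permutation of the indices $1,\ldots,s$. (ii) No pointwise ties with probability one: I would argue that the assumption of no pointwise ties in $SP(\mu,\gamma)$ at each $r_k$ propagates to the coordinates of the test vectors. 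For $\TT$ and $\TT'$ this is because the coordinates are linear combinations of the $T_{ij}(r_k)$, and two such linear combinations coming from different permutations differ on a set of measure zero unless the coefficient vectors coincide, which they do not by construction of the permutation scheme. For $\TT_2$, whose entries are pointwise $F$-statistics, the same reasoning applies after noting that the $F$-statistic for fixed $r_k$ is an almost surely non-degenerate function of the sample.

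Once (i) and (ii) are in place, I can invoke Theorem 1 directly. The definition of the extreme rank length $p$-value $p^{\text{erl}}$ in \eqref{eq:p-value_globalranktest}, the definition of $R^{\text{erl}}_{(\alpha)}$ via \eqref{eq:Ralpha}, and the definition of $T^{\alpha}_{\lo\,k}$ and $T^{\alpha}_{\up\,k}$ as the coordinatewise min/max over $I_\alpha$ are all the specializations of the quantities $p^{\text{erl}}$, $V^{(\alpha)}_{\lo\,k}$, $V^{(\alpha)}_{\up\,k}$ from Section \ref{sec:GET}. Thus Theorem 1 immediately gives both equivalences (1) and (2).

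The main obstacle I expect is the tie-handling step: one has to justify that the no-ties assumption on the underlying process $SP(\mu,\gamma)$ implies the no-pointwise-tie assumption required by Theorem 1 for the vectors $\TT^1,\ldots,\TT^s$. The statement is intuitively clear but requires a brief argument ruling out coincidences between sums over different permuted groups (and between $F$-statistics at the same $r_k$ across permutations); this is where I would spend most of the care. Everything else is essentially bookkeeping, and the second assertion of the theorem follows from the first by taking complements since the two events in (1) and (2) partition the sample space given the no-ties assumption.
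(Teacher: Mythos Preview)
Your proposal is correct and follows essentially the same route as the paper: the paper's entire proof reads ``Theorem~\ref{thm:envelope-vs-pinterval-T} is a direct consequence of Proposition~\ref{proposition1} and Theorem~\ref{thm:envelope-vs-pinterval},'' which is exactly your plan of invoking exchangeability (Proposition~\ref{proposition1}) and the general envelope--$p^{\text{erl}}$ equivalence (Theorem~\ref{thm:envelope-vs-pinterval}). One small remark: Theorem~\ref{thm:envelope-vs-pinterval} itself is a deterministic equivalence and does not actually require exchangeability---only the no-ties assumption---so your item (i) is needed for the validity of the test (the ``null hypothesis is rejected'' clause) rather than for the iff statements themselves; your extra care in item (ii) about propagating the no-ties assumption from $SP(\mu,\gamma)$ to the coordinates of $\TT,\TT',\TT_2$ is a detail the paper simply takes for granted.
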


Theorem \ref{thm:envelope-vs-pinterval-T} is direct consequence of Proposition \ref{proposition1} and Theorem \ref{thm:envelope-vs-pinterval}. The same theorem holds for the proposed homoscedasticity tests.
For the case of unequal variances the above interpretation is due to the Proposition \ref{proposition1} achieved only asymptotically with the additional assumption of normality.

Since we apply here the global extreme rank length envelope and not the global rank envelope as it was the case in our previous works, a lower number of permutations can be used. 
Anyway we recommend to use some thousands of permutations at minimum {for repeatability}. 
In case of many groups the number of permutations has to increased in order to not loose the power of the test, as it is demonstrated in the simulation study.

It is important to mention here that the graphical interpretion automatically identifies which groups are responsible for the potential rejection and also it identifies which parts of the functions are responsible for the rejection. This is very important for the interpretation of the result of the test.

Note that for the test vector $\TT_F$ the one-sided rank test has to be used, whereas for the other test vectors the two-sided rank test is used.

\subsection{Comparison to other permutation methods} 
The nonparametric permutation methods often used in the brain image statistics are similar to our proposed methods, therefore we would like to stress the differences. The single threshold test \citep{NicholsHolmes2001} of a certain statistic whose maximum is permuted is limited to the statistics that are homogeneous across the functional domain, in order to be sensitive in the whole functional domain and not only in the part of the domain where the functions are the most varying. The $p$-min permutation procedure used e.g.\ in \cite{PantazisEtal2005} solves this problem. 
This method can be viewed as our rank envelope $F$-type test. However, the $p$-min permutation procedure uses the conservative $p$-value of our rank envelope $F$-type test, i.e. the upper bound of the $p$-interval, $p_{+}$ in \eqref{eq:p-value_globalranktest}. On the other hand, our rank envelope $F$-type test is equipped also with the extreme rank length $p$-value which solves the problem of ties in the $p$-min distribution and therefore it significantly reduces the conservativeness of the test. 

Further, our graphical functional ANOVA test gives the graphical interpretation in the original space of functions and for each group of functions, whereas the $p$-min test gives it only in the transformed space of $p$-values and for all groups simultaneously. Therefore the $p$-min test is able only to identify the regions of rejection. Our graphical functional ANOVA test is also equipped with the global extreme rank length envelope which informs the user about the variability of the curves in the study. Finally, the graphical functional ANOVA test is defined here also for combining several post-hoc tests together in one test and therefore it indicates which two groups are different and where they are different.

\section{Simulation study}\label{sec:SS}
Our simulation study has four parts. First we compared our methods with some existing methods on a design taken from the study of \cite{CuevasEtal2004} in order to check if our methods are comparable in power and significance level to the existing methods. We chose methods which were available in the software R, especially in the packages fda.usc \citep{fda.usc} and fdANOVA \citep{fdANOVA}, and which are fundamentally different of each other. Second we checked the robustness of the studied methods with respect to heteroscedasticity. Third we changed the design from comparing three groups into comparing ten groups in order to check how much of the power is lost by having a long test vector $\TT$ or $\TT'$ with respect to other procedures. Fourth we studied the dependence of the powers on the level of discretization of the functions. 
The tests included in our study are listed in Table \ref{table:tests}.

\begin{table}[ht]
\caption{List of tests included in the simulation study with their abbreviations (Abbr.) and short description.}\label{table:tests}
\centering
\begin{tabular}{lp{4.02cm}p{9.0cm}}
  \hline
 Abbr. & Introduced/described in & Test description \\ \hline
  AsF   & \cite{CuevasEtal2004} & a bootstrapped version of the asymptotic $F$-test  \\ 
  RPM   & \cite{CuestaFebrero2010} & a random univariate projection method  \\ 
  Fb   & \cite{Zhang2014} & a bootstrapped version of the $F$-type statistic test   \\ 
  GPF   & \cite{Zhang2014} & a globalizing pointwise $F$-test  \\ 
  FP   & \cite{GoreckiSmaga2015} & a method based on a basis function representation  \\ 
  IPT   & \cite{Hahn2012} & a one-dimensional integral permutation test  \\ 
  F-max   & \cite{NicholsHolmes2001} & the $F$-max permutation procedure  \\ 
  p-min   & \cite{PantazisEtal2005} & the $p$-min permutation procedure \\ \hline
  GFAM   & here & the graphical functional ANOVA based on the test vector \eqref{TT} (group means) \\ 
  GFAC   & here & the graphical functional ANOVA based on the test vector \eqref{TTprime} (group mean contrasts) \\ 
  REF   & here &  the rank envelope $F$-type test \\  
   \hline
\end{tabular}
\end{table}


First we compared all the tests of Table \ref{table:tests} using an artificial example of $J=3$ groups and $n=10$ functions in each group observed in the interval $[0,1]$ through 100 evenly spread discrete points. Four different models with two different autocorrelation error structures were considered. {The models were}
\begin{itemize}
\item M1: $T_{ij}(r) = r (1-r) + e_{ij}(r), \; i=1, 2, 3, j=1, \ldots , 10$,
\item M2: $T_{ij}(r) = r^i (1-r)^{6-i} + e_{ij}(r), \; i=1, 2, 3, j=1, \ldots , 10$,
\item M3: $T_{ij}(r) = r^{i/5} (1-r)^{6-i/5} + e_{ij}(r), \; i=1, 2, 3, j=1, \ldots , 10$,
\item M4: $T_{ij}(r) = 1+i/50 + e_{ij}(r), \; i=1, 2, 3, j=1, \ldots , 10$.
\end{itemize}
The mean function for each model and group is shown in Figure \ref{fig:fce}.

\begin{figure}
    \centering
    \includegraphics[width=\textwidth]{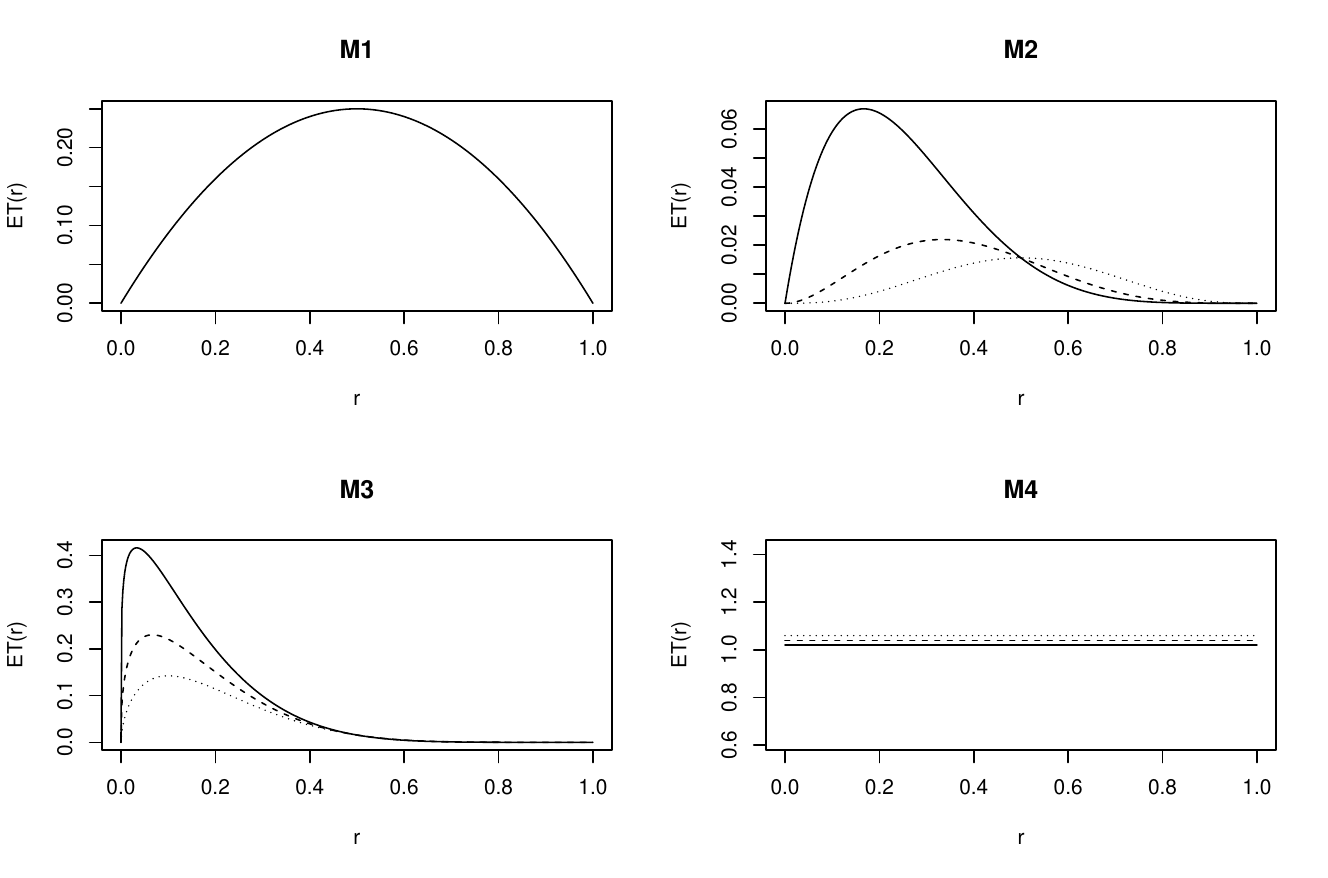}
    \caption{The mean functions of the models M1, M2, M3 and M4. The first group ($i=1$) corresponds to the solid line, second ($i=2$) dashed and third ($i=3$) dotted. {In M1, the three lines coincide.}}
    \label{fig:fce}
\end{figure}
The first autocorrelation structure of errors was modelled by the Gaussian random process with exponential correlation structure with scale parameter equal to 0.1 and standard deviation $\sigma$.  In the second structure, the errors $e_{ij}(r)$ were modelled by the  Brownian process with dispersion parameter $\sigma$. 
For each combination of the four models and two autocorrelation structures, we considered six different contaminations for the deterministic part of the model given by six different standard deviations $\sigma_1 = 0.05, \sigma_2 = 0.1, \sigma_3 = 0.15, \sigma_4 = 0.2, \sigma_5 = 0.4, \sigma_6 = 0.8$.
Every standard deviation was twice the previous one, except $\sigma_3$ which was added in the middle of $\sigma_2$ and $\sigma_4$ in order to increase the sensitivity of the simulation study.
	
The model M1 corresponds to the situation where $H_0$ is true. Thus, in this case, the predetermined significance level was estimated. It was set to $\alpha=0.05$ in all cases. The other models represent different situations where $H_0$ is false. The mean functions in the models M2 and M3 have different shape, whereas the mean functions in the model M4 are constant.

Although the permutation tests are exact, they contain variability in sampled permutations and further in the resulted $p$-value. \cite{LoosmoreFord2006} recommended to use at least 1000 Monte Carlo replicates in order to control this variability. Therefore, all the tested procedures were run using 2000 Monte Carlo replications or permutations in order to keep the running time manageable and fulfil the above requirement. The RPM was run with 30 random projections, as it was found by \cite{CuestaFebrero2010} to be high enough in a general case, and the false discovery rate $p$-value computed out of these projections was used as a final output of this procedure. The extreme rank length $p$-value was used as the output of all our new tests (GFAM, GFAC, REF). 
We performed 1000 simulations for each combination of model, autocorrelation structure and standard deviation, and 
we computed the proportion of rejections to obtain the estimates of significance levels and powers. All the results are summarized in Table \ref{iid} for the Gaussian process cases and in Table \ref{Brown} for the Brownian error cases.

The empirical significance level should be in
the interval $(0.037, 0.064)$ with the probability 0.95 (given by the 2.5\% and 97.5\% quantiles of the binomial distribution with parameters 1000 and 0.05). This was satisfied, except for the AsF and Fb procedures in the case of the Gaussian process errors. This exception was caused by the high number of discretized points. For 20 discretized points these methods did not show this feature.

Our three tests had very good power in the Brownian error case and slightly lower power than the best methods (GPF, FP, IPT) in the Gaussian process error case. {This different behaviour was obviously caused by the fact that the Gaussian process error is better captured by integral based methods, whereas Brownian error case with increasing variability is better captured by maximum based methods.} Surprisingly the REF test did not have greater powers than the RECM and RECMD tests which are fully nonparametric (the errors were normally distributed in the considered models).


\begin{table}[ht]
\caption{The proportions of rejections (at level 0.05) over 1000 runs in the case of Gaussian process errors for models M1, M2, M3, M4. See text for the model specifications and {Table \ref{table:tests} for descriptions of different test abbreviations}.}\label{iid}
\centering
\begin{tabular}{lcccccc}
  \hline
 \bf{M1} & $\sigma_1=0.05$ & $\sigma_2=0.1$ & $\sigma_3=0.15$ & $\sigma_ 4=0.2$ & $\sigma_5=0.4$ & $\sigma_6=0.8$ \\
  \hline
AsF   & 0.031 & 0.029 & 0.023 & 0.031 & 0.026 & 0.035 \\ 
  RPM   & 0.057 & 0.061 & 0.055 & 0.057 & 0.077 & 0.056 \\ 
  Fb   & 0.011 & 0.005 & 0.009 & 0.009 & 0.007 & 0.007 \\ 
  GPF   & 0.061 & 0.063 & 0.048 & 0.058 & 0.067 & 0.062 \\ 
  FP   & 0.059 & 0.055 & 0.040 & 0.056 & 0.060 & 0.050 \\ 
  IPT   & 0.062 & 0.061 & 0.042 & 0.052 & 0.060 & 0.047 \\ 
  F-max   & 0.058 & 0.063 & 0.046 & 0.051 & 0.052 & 0.048 \\ 
  p-min   & 0.034 & 0.041 & 0.025 & 0.028 & 0.035 & 0.029 \\ 
  GFAM   & 0.062 & 0.058 & 0.041 & 0.046 & 0.060 & 0.054 \\ 
  GFAC   & 0.047 & 0.061 & 0.038 & 0.043 & 0.058 & 0.051 \\ 
  REF   & 0.054 & 0.062 & 0.046 & 0.046 & 0.055 & 0.048 \\  
   \hline
 \hline
 \bf{M2} & $\sigma_1=0.05$ & $\sigma_2=0.1$ & $\sigma_3=0.15$ & $\sigma_ 4=0.2$ & $\sigma_5=0.4$ & $\sigma_6=0.8$ \\
  \hline
AsF   & 0.615 & 0.133 & 0.058 & 0.048 & 0.026 & 0.024 \\ 
  RPM   & 0.515 & 0.143 & 0.088 & 0.071 & 0.069 & 0.062 \\ 
  Fb   & 0.437 & 0.055 & 0.022 & 0.017 & 0.007 & 0.005 \\ 
  GPF   & 0.729 & 0.209 & 0.113 & 0.086 & 0.055 & 0.056 \\ 
  FP   & 0.717 & 0.197 & 0.104 & 0.084 & 0.050 & 0.054 \\ 
  IPT   & 0.699 & 0.192 & 0.104 & 0.074 & 0.051 & 0.054 \\ 
  F-max   & 0.576 & 0.128 & 0.085 & 0.067 & 0.051 & 0.050 \\ 
  p-min   & 0.486 & 0.096 & 0.053 & 0.043 & 0.030 & 0.028 \\ 
  GFAM   & 0.613 & 0.166 & 0.094 & 0.065 & 0.063 & 0.055 \\ 
  GFAC   & 0.600 & 0.144 & 0.084 & 0.062 & 0.055 & 0.057 \\ 
  REF   & 0.586 & 0.134 & 0.084 & 0.063 & 0.054 & 0.057 \\ 
  \hline
 \hline
 \bf{M3} & $\sigma_1=0.05$ & $\sigma_2=0.1$ & $\sigma_3=0.15$ & $\sigma_ 4=0.2$ & $\sigma_5=0.4$ & $\sigma_6=0.8$ \\
  \hline
AsF   & 1.000 & 0.479 & 0.190 & 0.094 & 0.035 & 0.029 \\ 
  RPM   & 0.991 & 0.442 & 0.201 & 0.118 & 0.069 & 0.059 \\ 
  Fb   & 0.994 & 0.319 & 0.086 & 0.033 & 0.009 & 0.006 \\ 
  GPF   & 1.000 & 0.610 & 0.284 & 0.165 & 0.073 & 0.054 \\ 
  FP   & 1.000 & 0.615 & 0.275 & 0.152 & 0.065 & 0.057 \\ 
  IPT   & 1.000 & 0.586 & 0.260 & 0.150 & 0.069 & 0.050 \\ 
  F-max   & 1.000 & 0.628 & 0.230 & 0.141 & 0.068 & 0.050 \\ 
  p-min   & 1.000 & 0.527 & 0.168 & 0.096 & 0.041 & 0.036 \\ 
  GFAM   & 1.000 & 0.637 & 0.260 & 0.146 & 0.071 & 0.057 \\ 
  GFAC   & 1.000 & 0.659 & 0.267 & 0.160 & 0.063 & 0.059 \\ 
  REF   & 1.000 & 0.634 & 0.259 & 0.138 & 0.064 & 0.049 \\ 
 \hline
 \bf{M4} & $\sigma_1=0.05$ & $\sigma_2=0.1$ & $\sigma_3=0.15$ & $\sigma_ 4=0.2$ & $\sigma_5=0.4$ & $\sigma_6=0.8$ \\
  \hline
AsF   & 0.813 & 0.202 & 0.086 & 0.053 & 0.035 & 0.025 \\ 
  RPM   & 0.686 & 0.195 & 0.096 & 0.066 & 0.061 & 0.052 \\ 
  Fb   & 0.694 & 0.113 & 0.041 & 0.018 & 0.010 & 0.006 \\ 
  GPF   & 0.888 & 0.290 & 0.134 & 0.106 & 0.077 & 0.054 \\ 
  FP   & 0.876 & 0.280 & 0.131 & 0.096 & 0.065 & 0.052 \\ 
  IPT   & 0.861 & 0.265 & 0.125 & 0.093 & 0.069 & 0.048 \\ 
  F-max   & 0.548 & 0.156 & 0.098 & 0.069 & 0.068 & 0.043 \\ 
  p-min   & 0.468 & 0.116 & 0.061 & 0.060 & 0.042 & 0.029 \\ 
  GFAM   & 0.617 & 0.185 & 0.095 & 0.081 & 0.067 & 0.045 \\ 
  GFAC   & 0.623 & 0.181 & 0.097 & 0.072 & 0.067 & 0.056 \\ 
  REF   & 0.574 & 0.162 & 0.098 & 0.066 & 0.067 & 0.042 \\ 
  \hline
\end{tabular}
\end{table}

\begin{table}[ht]
\caption{The proportions of rejections (at level 0.05) over 1000 runs in the case of Brownian errors for models M1, M2, M3, M4. See text for the model specifications and {Table \ref{table:tests} for descriptions of different test abbreviations}.}
\label{Brown}
\centering
\begin{tabular}{lcccccc}
  \hline
 \bf{M1} & $\sigma_1=0.05$ & $\sigma_2=0.1$ & $\sigma_3=0.15$ & $\sigma_ 4=0.2$ & $\sigma_5=0.4$ & $\sigma_6=0.8$ \\
  \hline
AsF   & 0.066 & 0.063 & 0.060 & 0.060 & 0.060 & 0.069  \\ 
  RPM   & 0.055 & 0.039 & 0.042 & 0.042 & 0.038 & 0.045  \\ 
  Fb   & 0.033 & 0.032 & 0.038 & 0.032 & 0.030 & 0.039  \\ 
  GPF   & 0.068 & 0.071 & 0.068 & 0.071 & 0.071 & 0.077  \\ 
  FP   & 0.047 & 0.056 & 0.050 & 0.048 & 0.048 & 0.054  \\ 
  IPT   & 0.050 & 0.044 & 0.053 & 0.056 & 0.049 & 0.062  \\ 
  F-max   & 0.058 & 0.041 & 0.059 & 0.050 & 0.047 & 0.058 \\ 
  p-min   & 0.046 & 0.042 & 0.050 & 0.047 & 0.045 & 0.057  \\ 
  GFAM   & 0.047 & 0.047 & 0.056 & 0.050 & 0.052 & 0.054  \\ 
  GFAC   & 0.053 & 0.050 & 0.066 & 0.044 & 0.049 & 0.053 \\ 
  REF   & 0.053 & 0.037 & 0.054 & 0.048 & 0.048 & 0.057 \\ 
   \hline
 \hline
 \bf{M2} & $\sigma_1=0.05$ & $\sigma_2=0.1$ & $\sigma_3=0.15$ & $\sigma_ 4=0.2$ & $\sigma_5=0.4$ & $\sigma_6=0.8$ \\
  \hline
AsF   & 0.660 & 0.134 & 0.091 & 0.080 & 0.077 & 0.062  \\ 
  RPM   & 0.993 & 0.623 & 0.255 & 0.134 & 0.065 & 0.049  \\ 
  Fb   & 0.393 & 0.065 & 0.048 & 0.037 & 0.042 & 0.026  \\ 
  GPF   & 1.000 & 0.663 & 0.279 & 0.171 & 0.102 & 0.071 \\ 
  FP   & 0.645 & 0.106 & 0.066 & 0.064 & 0.065 & 0.049  \\ 
  IPT   & 1.000 & 0.600 & 0.227 & 0.129 & 0.074 & 0.049 \\ 
  F-max   & 1.000 & 0.958 & 0.599 & 0.337 & 0.110 & 0.059 \\ 
  p-min   & 1.000 & 0.954 & 0.584 & 0.330 & 0.105 & 0.050 \\ 
  GFAM   & 1.000 & 0.949 & 0.548 & 0.321 & 0.112 & 0.053 \\ 
  GFAC   & 1.000 & 0.930 & 0.540 & 0.308 & 0.116 & 0.048 \\ 
  REF   & 1.000 & 0.955 & 0.598 & 0.338 & 0.108 & 0.053 \\ 
  \hline
 \hline
 \bf{M3} & $\sigma_1=0.05$ & $\sigma_2=0.1$ & $\sigma_3=0.15$ & $\sigma_ 4=0.2$ & $\sigma_5=0.4$ & $\sigma_6=0.8$ \\
  \hline
AsF   & 1.000 & 0.506 & 0.194 & 0.122 & 0.073 & 0.055 \\ 
  RPM   & 1.000 & 0.997 & 0.894 & 0.652 & 0.159 & 0.064 \\ 
  Fb   & 0.997 & 0.230 & 0.098 & 0.047 & 0.040 & 0.031 \\ 
  GPF   & 1.000 & 1.000 & 1.000 & 0.996 & 0.329 & 0.099 \\ 
  FP   & 1.000 & 0.455 & 0.158 & 0.092 & 0.057 & 0.042 \\ 
  IPT   & 1.000 & 1.000 & 1.000 & 0.994 & 0.259 & 0.073 \\ 
  F-max   & 1.000 & 1.000 & 1.000 & 1.000 & 1.000 & 0.608  \\ 
  p-min   & 1.000 & 1.000 & 1.000 & 1.000 & 1.000 & 0.592 \\ 
  GFAM   & 1.000 & 1.000 & 1.000 & 1.000 & 0.996 & 0.436 \\ 
  GFAC   & 1.000 & 1.000 & 1.000 & 1.000 & 0.994 & 0.476  \\ 
  REF   & 1.000 & 1.000 & 1.000 & 1.000 & 1.000 & 0.598  \\ 
  \hline
 \hline
 \bf{M4}& $\sigma_1=0.05$ & $\sigma_2=0.1$ & $\sigma_3=0.15$ & $\sigma_ 4=0.2$ & $\sigma_5=0.4$ & $\sigma_6=0.8$ \\
  \hline
AsF   & 0.746 & 0.226 & 0.132 & 0.096 & 0.105 & 0.063  \\ 
  RPM   & 0.920 & 0.288 & 0.110 & 0.072 & 0.060 & 0.040 \\ 
  Fb   & 0.590 & 0.144 & 0.077 & 0.060 & 0.053 & 0.038 \\ 
  GPF   & 1.000 & 0.667 & 0.295 & 0.190 & 0.115 & 0.074 \\ 
  FP   & 0.691 & 0.193 & 0.116 & 0.077 & 0.079 & 0.049 \\ 
  IPT   & 1.000 & 0.596 & 0.240 & 0.143 & 0.091 & 0.052 \\ 
  F-max   & 1.000 & 1.000 & 0.981 & 0.808 & 0.188 & 0.075 \\ 
  p-min   & 1.000 & 1.000 & 0.978 & 0.791 & 0.180 & 0.068  \\ 
  GFAM   & 1.000 & 0.999 & 0.903 & 0.628 & 0.155 & 0.070 \\ 
  GFAC   & 1.000 & 0.997 & 0.893 & 0.648 & 0.146 & 0.069 \\ 
  REF   & 1.000 & 1.000 & 0.981 & 0.798 & 0.187 & 0.074 \\ 
   \hline
   \end{tabular}
\end{table}

In the second part of our simulation study, we studied the robustness of all studied methods to heteroscedasticity. We computed the empirical significance levels of all tests in the case of Gaussian process error structure and a) with standard deviations $\sigma\cdot 0.5^{(i-1)}$, where $i$ is the group indicator, and
b) with the same standard deviations $\sigma\cdot 0.5^{(i-1)}$ and further with scale parameter of the Gaussian error process equal to $i/10$.
The setting a) corresponds to the case of unequal variances, where our tests are asymptotically exact. In the setting b), the variances as well as covariance structures are unequal.
First, we used the tests without any corrections (Table \ref{unequal} rows 1-9). Second, we explored the corrected versions of our three tests (Table \ref{unequal} rows 12-14) as well as of IPT and RPM for comparison. For IPT, a variance transformation similar to the transformation \eqref{Sij2} employed for GFAM and GFAC tests was used. The RPM method relies instead on the variance correction of the $F$-statistic, similarly as REF method. 
The other methods of the table were not corrected for heteroscedasticity due to the fact that their implementation in R did not support it, but it is shown in \citet{CuevasEtal2004} that the AsF is robust to heteroscedasticity.


\begin{table}[ht]
\caption{The proportions of rejections (at level 0.05) over 1000 runs for model M1 for the Gaussian error process and heteroscedastic case. The results shown in last three rows correspond to our three tests with correction for unequal variances. {Also the RPM and IPT methods (below the horizontal line) were corrected, while the other tests (above the horizontal line) were used without corrections.}}
\label{unequal}
\centering
\begin{tabular}{c|c}
\hline
{\bf Unequal variances} & {\bf Unequal variances and covariances}\\
\begin{tabular}{lccc}
  \hline
  & $\sigma_1=0.05$ & $\sigma_2=0.1$ & $\sigma_3=0.15$ \\
  \hline
AsF   & 0.031 & 0.035 & 0.017\\ 
  Fb   & 0.007 & 0.008 & 0.006\\ 
  GPF   & 0.103 & 0.100 & 0.084\\ 
  FP   & 0.087 & 0.095 & 0.072\\ 
  F-max   & 0.282 & 0.267 & 0.272\\ 
  p-min   & 0.181 & 0.184 & 0.180\\ 
  GFAM   & 0.173 & 0.163 & 0.178\\ 
  GFAC   & 0.204 & 0.194 & 0.184\\ 
  REF   & 0.241 & 0.226 & 0.234\\ \hline
  RPM   & 0.107 & 0.103 & 0.117\\ 
 IPT   & 0.109 & 0.116 & 0.103\\ 
   GFAMU   & 0.111 & 0.107 & 0.081\\ 
  GFACU   & 0.074 & 0.063 & 0.054\\ 
  REFU   & 0.156 & 0.168 & 0.163\\ 
   \hline
   \end{tabular}
   &
   \begin{tabular}{lccc}
  \hline
  & $\sigma_1=0.05$ & $\sigma_2=0.1$ & $\sigma_3=0.15$ \\
  \hline
AsF   & 0.032 & 0.029 & 0.032\\ 
  Fb   & 0.011 & 0.006 & 0.011\\ 
  GPF   & 0.090 & 0.081 & 0.076\\ 
  FP   & 0.077 & 0.073 & 0.066\\ 
  F-max   & 0.285 & 0.279 & 0.272\\ 
  p-min   & 0.208 & 0.202 & 0.195\\ 
  GFAM   & 0.154 & 0.158 & 0.147\\ 
  GFAC   & 0.198 & 0.186 & 0.160\\ 
  REF   & 0.244 & 0.247 & 0.228\\ \hline 
  RPM   & 0.096 & 0.096 & 0.101\\ 
  IPT   & 0.104 & 0.102 & 0.093\\ 
   GFAMU   & 0.067 & 0.071 & 0.079\\ 
  GFACU   & 0.042 & 0.048 & 0.030\\ 
  REFU   & 0.144 & 0.143 & 0.124\\ 
   \hline
   \end{tabular}
   
   \end{tabular}
   
\end{table}

The second part of the study shows that the methods based on the maximum ($F$-max, $p$-min, GFAM, GFAC, REF) are much more sensitive to the heteroscedasticity than the methods based on the integral principle (Fb, GPF, IPT). The FP method, which is based on the basis representation, was least affected by heteroscedasticity. The AsF and Fb methods were clearly conservative as in the homoscedastic case. 
Considering the methods with correction for unequal variances, i.e.\ GFAMU, GFACU, REFU, IPT, RPM (Table \ref{unequal} rows 10-14), the GFACU test was the least liberal method in the small sample case of ten functions per group. The GFACU test was even less liberal than FP. The unequality of covariance structures did not affect the liberality of the methods. Thus, we conclude that our variance correction by transformation \eqref{Sij2} of functions can be used even for small sample sizes with expecting small liberality.

In the third part of our simulation study we took the model M3 and extended it for ten groups, considering
$$\text{M: } T_{ij}(r) = r^{i/5} (1-r)^{6-i/5} + e_{ij}(r), \; i=2, \ldots , 11, j=1, \ldots , 10.$$
We used again the two correlation structures in the model and six levels of contamination as in the first part of the study. Table \ref{10groups} summarizes the results both for the Gaussian process error (upper part) and the Brownian error case (lower part).
The relations between powers of different methods were the same as in the case of the three groups. Also there was no observable decrease in the power for the GFAM and GFAC methods with respect to the REF and $p$-min methods in the Gaussian process error case. On the other hand, there was such decrease in the Brownian error case. 
This loss of power in the GFAM and GFAC tests can be prevented by increasing number of permutations: We performed the experiment also with 10000 permutations and obtained very similar powers as with 2000 permutations for all the other methods except GFAM and GFAC: the power of GFAM increased from 0.360 to 0.884 and the power of GFAC from 0.360 to 0.688 in the case of $\sigma_6$ and Brownian errors. Thus the power of the graphical tests was comparable to the power of the REF and $p$-min tests with 10000 permutations.

\begin{table}[ht]
\caption{The proportions of rejections (at level 0.05) over 1000 runs for model M.  The Gaussian error process cases are shown in the upper part and the Brownian error cases are shown in the lower part of the table. See text for the model specification and {Table \ref{table:tests} for descriptions of different test abbreviations}.}
\label{10groups}
\centering
\begin{tabular}{lcccccc}
  \hline
 \bf{iid} & $\sigma_1=0.05$ & $\sigma_2=0.1$ & $\sigma_3=0.15$ & $\sigma_ 4=0.2$ & $\sigma_5=0.4$ & $\sigma_6=0.8$ \\
  \hline
AsF   & 1.000 & 0.950 & 0.493 & 0.262 & 0.057 & 0.045 \\ 
  RPM   & 1.000 & 0.982 & 0.573 & 0.275 & 0.106 & 0.055 \\ 
  Fb   & 1.000 & 0.899 & 0.271 & 0.065 & 0.007 & 0.005 \\ 
  GPF   & 1.000 & 0.996 & 0.743 & 0.421 & 0.085 & 0.092 \\ 
  FP   & 1.000 & 0.996 & 0.725 & 0.414 & 0.077 & 0.071 \\ 
  IPT   & 1.000 & 0.990 & 0.715 & 0.394 & 0.102 & 0.107 \\ 
  F-max   & 1.000 & 0.997 & 0.775 & 0.408 & 0.112 & 0.065 \\ 
  p-min   & 1.000 & 0.997 & 0.704 & 0.344 & 0.095 & 0.049 \\ 
  GFAM   & 1.000 & 1.000 & 0.786 & 0.449 & 0.120 & 0.077 \\ 
  GFAC   & 1.000 & 1.000 & 0.842 & 0.521 & 0.119 & 0.078 \\ 
  REF   & 1.000 & 1.000 & 0.760 & 0.425 & 0.106 & 0.078 \\ 
   \hline
 \hline
 \bf{Brown} & $\sigma_1=0.05$ & $\sigma_2=0.1$ & $\sigma_3=0.15$ & $\sigma_ 4=0.2$ & $\sigma_5=0.4$ & $\sigma_6=0.8$ \\
  \hline
AsF   & 1.000 & 0.996 & 0.656 & 0.364 & 0.092 & 0.064 \\ 
  RPM   & 1.000 & 1.000 & 1.000 & 0.996 & 0.544 & 0.104 \\ 
  Fb   & 1.000 & 0.944 & 0.208 & 0.148 & 0.036 & 0.012 \\ 
  GPF   & 1.000 & 1.000 & 1.000 & 1.000 & 0.724 & 0.148 \\ 
  FP   & 1.000 & 1.000 & 0.564 & 0.268 & 0.048 & 0.052 \\ 
  IPT   & 1.000 & 1.000 & 1.000 & 1.000 & 0.656 & 0.124 \\ 
  F-max   & 1.000 & 1.000 & 1.000 & 1.000 & 1.000 & 0.912 \\ 
  p-min   & 1.000 & 1.000 & 1.000 & 1.000 & 1.000 & 0.892 \\ 
  GFAM   & 1.000 & 1.000 & 1.000 & 1.000 & 0.976 & 0.360 \\ 
  GFAC   & 1.000 & 1.000 & 1.000 & 1.000 & 1.000 & 0.360  \\ 
  REF   & 1.000 & 1.000 & 1.000 & 1.000 & 1.000 & 0.904 \\ 
   \hline
   \end{tabular}
\end{table}

In the last part of the simulation study we studied the powers of all test with respect to increasing discretization of the functions. For this purpose we simulated the model M3 with Gaussian process error with standard deviation equal to 0.15. Table \ref{discrete} shows the estimated powers for five different levels of discretizations expressed by the number of observations of functions.
The powers did not decrease with increasing level of discretization. The only exception was the AsF method, whose power decreased, which is correspondence with our finding in the first part of the study that this method was conservative for resolution with 100 points.

\begin{table}[ht]
\caption{The proportions of rejections (at level 0.05) over 1000 runs for model M3, Gaussian process error with standard deviation equal to 0.15 and various levels of discretizations {({\bf $K$} = number of discretized values of functions)}. }
\label{discrete}
\centering
\begin{tabular}{crrrrr}
  \hline
{\bf $K$} & 25 & 50 & 100 & 200 & 400 \\ 
  \hline
AsF   & 0.157 &	0.16	&0.164&	0.160&	0.165 \\ 
  RPM   & 0.183	&0.192	&0.178&	0.19&	0.182 \\ 
  Fb   & 0.07	&0.075&	0.077&	0.075&	0.078 \\ 
  GPF   & 0.254&	0.264&	0.273&	0.276&	0.274 \\ 
  FP   & 0.234&	0.248&	0.254&	0.255&	0.250 \\ 
  IPT   & 0.235&	0.248&	0.252&	0.253&	0.253 \\ 
  F-max   & 0.228&	0.252&	0.255&	0.246&	0.236 \\ 
  p-min   &0.210&	0.196&	0.190&	0.194&	0.033\\ 
  GFAM   & 0.234&	0.254&	0.268&	0.267&	0.256 \\ 
  GFAC   & 0.229&	0.254&	0.268&	0.254&	0.264 \\ 
  REF   & 0.233	&0.245&	0.249&	0.238&	0.231 \\ 
   \hline
\end{tabular}
\end{table}

\section{Fiscal decentralization example}\label{sec:DS}
The topic of fiscal federalism was brought about into the normative theory of public finance in the middle of twentieth century. The main issue was to solve the extent to which fiscal competences and responsibilities should be decentralized from central to sub-central levels of government. The gradual development of the theory of fiscal decentralization led to distinguishing between the first and the second generation theories of fiscal decentralization, as explained in details by \citet{Oates2005} and \citet{Vo2010}.

Generally, there are two types of empirical studies on fiscal decentralization. Within the first type, the concern is in the consequences of fiscal decentralization in terms of economic growth and the growth of public sector. The second type of studies deal with the determinants of fiscal decentralization including a growing body of literature dealing with the issues of globalization, economic and political integration and its consequences on decentralization or secession. {Our empirical example deals with the effects of European integration on fiscal decentralization dynamics in individual European countries. }

\subsection{ Decentralization under European economic and political integration}\label{se:Decentr}
The decentralization has been a characteristic feature of social development in many democratic countries since the last decades of 20th century. The usual presumption is that the federated countries are more decentralized than unitary ones. The process of decentralization, however, is not derived only from the switch to the federal structure, but it is usually more gradual and it is influenced by a number of factors. Even though the institutional (constitutional) changes might be crucial, there are usually many gradual and subtle changes in de facto decentralization. Moreover, as \citet{Arzaghi2005} noted, constitutional changes are discrete events which in certain contexts may be difficult culturally and politically to achieve. The gradual changes are more likely to be reflected in a continuous measure, such as the ratio of state and local governments in total general government expenditures or revenue. 

Recently, the issue of centralization versus decentralization of government has attracted attention in Europe. 
On the one hand, there are efforts to further integrate or even federalize the European Union (EU), on the other hand there is resistance to further integration, or even process of (br)exit from the EU (the case of United Kingdom). Also some secession tendencies at the sub-national level of individual countries, i.e.\ in Spain, Italy, Belgium or the UK, are quite strong \citep{Sedova2017}. The prominent argument behind these tendencies is an insufficient (fiscal) decentralization.
The EU and its member countries are experiencing two parallel tendencies of decentralization and centralization. 

\citet{Ackerman1997} claims that there is a continuum between international treaties based integration and the federal constitutions and thus between international organizations and federations. European Community, even before it became the European Union, has moved a long way along the continuum towards the federation. This is true mostly with regards to the capacity to legislate or regulate. The EU budget is still rather small, amounting to approximately one percent of gross national income (GNI). According to \citet{Spahn2015}, the EU is entity sui generis, not a federation, although this might be contested. The EU already has institutions of a federation including a Second Chamber in the form of the European Council and powerful exclusive policy competencies in competition and commerce. 

A deep insight into the causes of decentralization or even secession trends is provided by \citet{Alesina1997}. They formulated the trade-off between the benefits of large jurisdictions and the cost of heterogeneity of large and diverse populations, which determine the size of countries. By reducing the political and economic transaction cost, economic integration extends the size of market and lowers the benefits of large jurisdictions, thus enhancing incentives to secession. They conclude that the democratization (as is in the case of the EU) lead to secessions. Decentralization can however be associated with lower cost and may be preferred over secession \citep{Bolton1997}. Because the EU enlargements, whilst keeping the European diversity, have gradually created large single internal market, the EU economy as a whole became much less open compared to individual countries. Therefore, we can hypothesize tendency to the European integration conditioned decentralization of government budgets in individual EU countries.
Similar conclusions of positive effects of economic, political and social integration on fiscal decentralization were found by \citet{Stegarescu2009} and \citet{Ermini2014} for OECD countries. National governments in the integrating EU therefore run the risk of getting squeezed between the supranational and the sub-national levels of government. Lower government expenditure centralization ratio may be expected in this case.


\citet{Rodrik1998} formulated the positive effect of economic openness on the size of the public sector as well as on the fiscal centralization due to higher government expenditures for centralized redistribution and macroeconomic stabilization. If the economic integration strengthens incentives to fiscally decentralize, under some circumstances, an increased integration may cause fiscal centralization. Such circumstances may come with macroeconomic imbalances in economic crisis.

\subsection{Decentralization characteristic, data and hypothesis formulation}
To analyze the fiscal decentralization, a suitable characteristic is needed.  There are variety of approaches of expressing the fiscal decentralization phenomenon \citep[for complex overview see][]{Stegarescu2005,Vo2008}. This paper follows in principle the approach of \citet{Cerniglia2003} and \citet{Arzaghi2005}, using  the ratio of centralization. The advantage of this simple approach is twofold: it avoids the problems with various, complicated and not easily comparable structures of decentralized levels of governments, and it gives the largest dataset.

We use the government expenditure centralization ratio (GEC) in percent. It is the ratio of central government expenditure to the total general government expenditure. Because it includes all kinds of government expenditure (consumption, investment and transfers), it is the most general measure of expenditure decentralization. Data were collected from the \citet{Eurostat} database. European countries were selected in order to achieve the maximum size of dataset. Only those countries were included, where the data were available from 1995 to 2016 without interruption. Finally, 29 countries were classified into three groups in the following way:

\begin{enumerate}
    \item[Group 1:] Countries joining EC between 1958 and 1986 (Belgium, Denmark, France, Germany (until 1990 former territory of the FRG), Greece, Ireland, Italy, Luxembourg, Netherlands, Portugal, Spain, United Kingdom. These countries have long history of European integration, representing the core of integration process.

\item[Group 2:] Countries joining the EU in 1995 (Austria, Sweden, Finland) and 2004 (Malta, Cyprus), except CEEC (separate group), plus highly economically integrated non-EU countries, EFTA members (Norway, Switzerland). Countries in this group have been, or in some case even still are standing apart from the integration mainstream. Their level of economic integration is however very high.

\item[Group 3:] Central and Eastern European Countries (CEEC), having similar features in political end economic history. The process of economic and political integration have been initiated by political changes in 1990s. CEEC joined the EU in 2004 and 2007 (Bulgaria, Czech Republic, Estonia, Hungary, Latvia, Lithuania, Poland, Romania, Slovakia, Slovenia, data for Croatia joining in 2013 are incomplete, therefore not included).
\end{enumerate}
Finally, based on the discussion presented in Section \ref{se:Decentr}, we formulate the null hypothesis that the trend of the fiscal centralization is same in all three groups of countries. 

\subsection{Data analysis}

The data were first centred with respect to country average in order to remove the differences in absolute {values} of GEC between countries{ and in order to keep the shape of GEC functions}. Thus we study the functions
$CGEC_{ij}(r)=GEC_{ij}(r)-\frac{1}{22}\sum_{r=1995}^{2016}GEC_{ij}(r), j=\{1,2,3\}, i=1,\ldots , n_j$, where $n_1=12, n_2=7, n_3=10$ {and $r=1995,1996,\dots,2016$}. The curves are shown in Figure \ref{fig:GE_C_curves}.
\begin{figure}
    \centering
    \includegraphics[width=\textwidth]{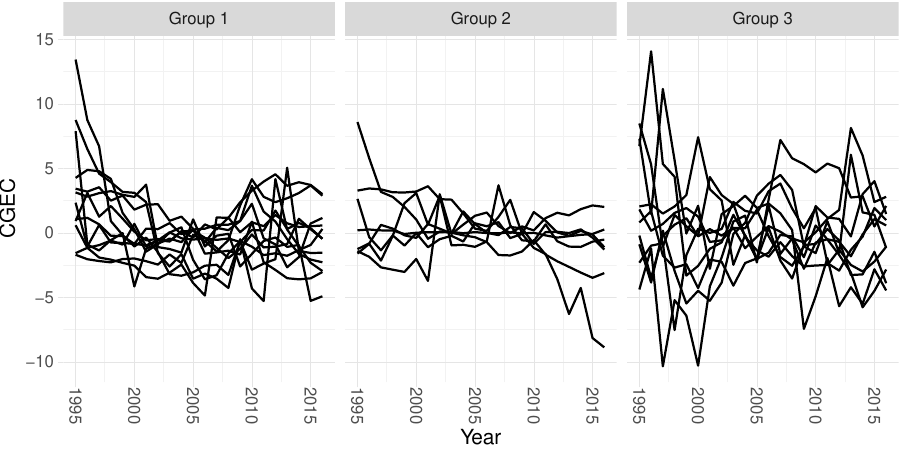}
    \caption{The centred GEC index for the years from 1995 to 2016 in the three groups.}
    \label{fig:GE_C_curves}
\end{figure}

First we checked the assumption of equality of covariance structure which is required by our tests. This we propose to do by testing the equality of lag 1 covariances. Figure \ref{fig:CovarianceTest} shows the result of the REF test applied to {the transformed functions \eqref{eq:covariance} ($p=0.392$)}. 
\begin{figure}
    \centering
   r \includegraphics[width=0.5\textwidth]{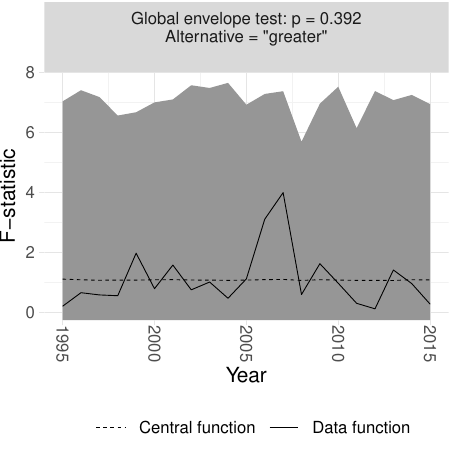}
    \caption{The test for equality of lag 1 covariances of the centred GEC indices in the three groups using the rank envelope $F$-type test (REF).}
    \label{fig:CovarianceTest}
\end{figure}

The next step is to decide if the correction for unequal variances should be used. Figure \ref{fig:VarianceTest} shows the results of the GFAM test applied {to the transformed functions \eqref{eq:variance}}. Since the global $p$-value is  0.163, we have no evidence that the group variances differ and we therefore prefer to use no correction. Figure \ref{fig:VarianceTest} also shows the mean absolute deviation across the three groups with the global $95\%$ envelope reflecting the overall variation of the mean absolute deviation among all groups {over the years}.

\begin{figure}
    \centering
    \includegraphics[width=\textwidth]{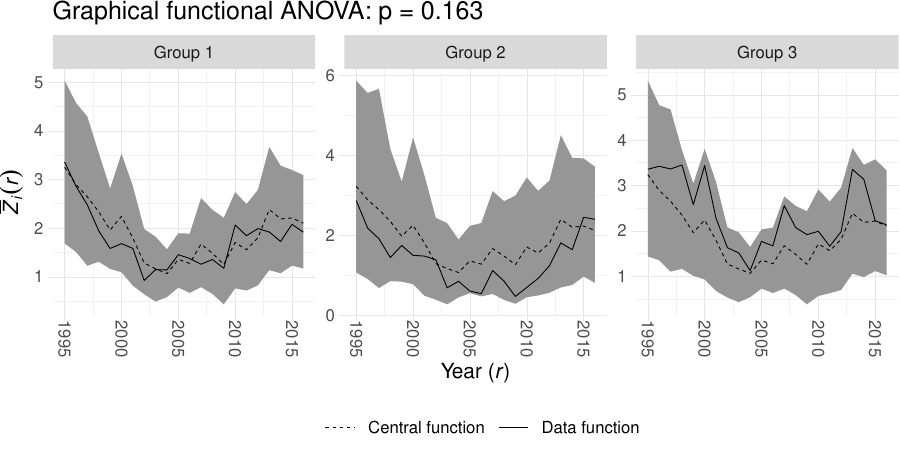}
    \caption{The GFAM test for equality of variances of the centred GEC indices in the three groups.}
    \label{fig:VarianceTest}
\end{figure}

Finally, we performed the GFAM and GFAC tests for equality of means. The Figure \ref{fig:GE_C_GFAM} shows the mean centred GEC functions across the groups and their time developments together with the global $95\%$ envelope reflecting the overall variation of the mean function among all groups. 
Figure \ref{fig:GE_C_GFAC} shows the group differences of the mean centred GEC functions and their time developments together with the global $95\%$ envelope reflecting the variation of these differences. Both tests shows the deviation from the null hypothesis in the year 2006 and the difference of the groups is supported by $p$-values equal to 0.045 and 0.021 respectively. According to the GFAM test, the first group is significantly different from the other two. According to the GFAC test in the post-hoc fashion, only groups 1 and 3 significantly differ. All the shown test were performed with 9999 permutations.

For comparison, we also performed the other tests of the simulation study (see Table \ref{table:tests}).
First, our REF test was borderline significant ($p=0.055$). The data statistic had a peak very close to envelope for the year 2006, but the test was not able to reject the null hypothesis at the strict significance level 0.05 (figure omitted). 
The $p$-values of the other tests were for AsF equal to 0.467, for  
RPM method equal to 0.375, Fb 0.531, GPF 0.304, FP 0.559, IPT 0.265 and $F$-max 0.046.
Since the deviation is present mainly only in one year, the integral based methods are not capable to detect this small difference, whereas the maximum type method $F$-max and all our methods are able to detect this difference.

\begin{figure}
    \centering
    \includegraphics[width=\textwidth]{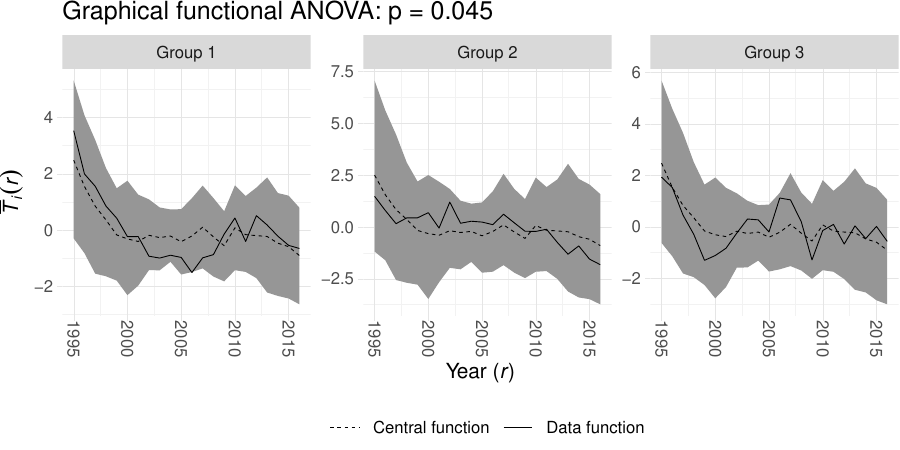}
    \caption{ The GFAM test for equality of means of the centred GEC indices in the three groups using the group means.}
    \label{fig:GE_C_GFAM}
\end{figure}

\begin{figure}
    \centering
    \includegraphics[width=\textwidth]{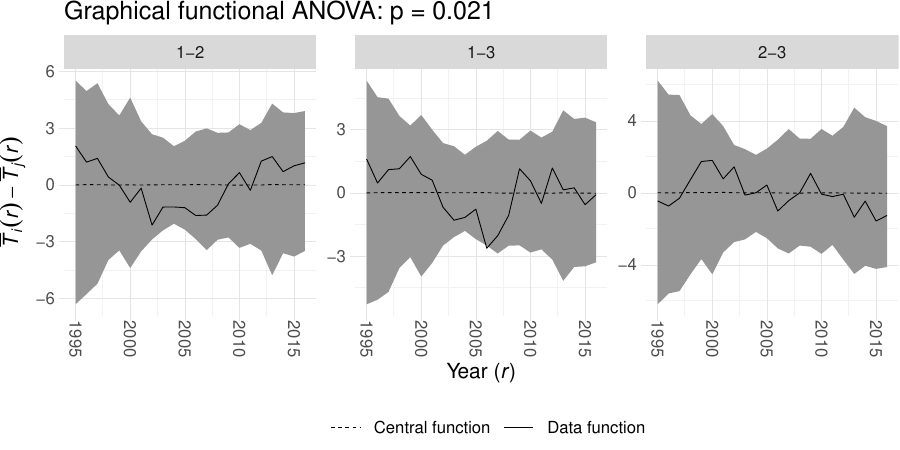}
    \caption{The GFAC test for equality of means of the centred GEC indices in the three groups using the group mean contrasts (comparison of groups 1--2, 1--3 and 2--3).}
    \label{fig:GE_C_GFAC}
\end{figure}

\subsection{Interpretation of the tests results}

{The permutation based tests $F$-max, GFAM and GFAC found differences between the groups. The GFAM test (} Figure \ref{fig:GE_C_GFAM}) shows greater {downward departure from the central function of government expenditure centralization in the period between 2001 and 2008} in the group 1 than in other groups. This tendency is supported by significant difference observed in the year 2006. This period roughly corresponds to the period between Euro introduction and start of financial crises. The tendency of the countries in the first group to centralize after the year 2008 could be caused by the need of central governments to increase the control over the general government budgets at the onset of the crisis. Also the well known phenomenon of rather pro-cyclical fiscal behaviour of sub-national governments supports this argument.

{The GFAC test (}Figure \ref{fig:GE_C_GFAC}) shows that the difference in trends of mean group government expenditure centralization is realized between groups 1 and 3. This result is consistent with the expectation, because the difference in the depth of integration between these groups is the largest.


\section{Discussion}\label{sec:discussion}

A new one-way graphical functional ANOVA test was introduced in this paper. Under the assumption that the all the data come from the same distribution, the test has exact type I error. It provides a graphical interpretation which is essential for the interpretation of the results. {Also tests and corrections for heteroscedasticity were presented.} Two different test vectors can be used leading to two different graphical interpretations. The first option compares every group with the rest of the groups. The second option compares differences between every pair of groups similarly like a Tukey post-hoc test in the univariate ANOVA. A positive side effect of our method is that the post-hoc test is provided together with the main ANOVA procedure at the given significance level. 

Since the proposed test works in a highly dimensional multivariate settings, no smoothness of the functions is required. On the other hand, the same discretization of functions is required for every function. If this is not the case, a smoothing technique has to be applied followed by further identical discretization of functions.

Our new methods were compared to the other functional ANOVA procedures, available through the software R, with respect to their power.  The new graphical {tests} had comparable power with respect to other procedures {in our simulation study}. {It was shown that the presented correction for unequal variances, which transforms the functions, can be used even with low sample sizes. Its robustness with respect to unequal covariance structure was also shown.}
 {In addition,} the proposed methods did not loose their power when the functions were more densely discretized. 


Importantly, our simulation study shows that there is no procedure which would be uniformly more powerful than our proposed {tests}. Therefore, we believe that our {tests} are useful in practice due to their graphical interpretation and post-hoc nature. 
{As shown by our simulation study,} our methods can loose some power when {the number of groups to compare is large}. In such a case where the test vector is very long, the number of permutations has to be increased in order to eliminate this problem.

Our new {tests} were designed for the one-way functional ANOVA design. A question of our future research is how these procedures can be extended into multi-way design. Since the permutation of the functional residuals leads to a liberal method, the problem has to be solved in a more complex way.

Our methods show graphically the region of rejections in the family wise error rate. The recent method of \citet{PiniEtAl2018} can also show the regions of rejections but in the sense of interval wise control of the error rate. We plan to compare these two approaches in the future.

\section*{Acknowledgements}
Mrkvi\v cka has been financially supported by the Grant Agency of Czech Republic (Project Nos.\ 16-03708S and 19-04412S) and Mari Myllym{\"a}ki by the Academy of Finland (project numbers 295100, 306875). Hahn's research has been supported by  the Centre for Stochastic Geometry and Advanced Bioimaging, funded by the Villum Foundation. 

\bibliographystyle{asa}



\end{document}